\documentclass[11pt]{article}

\usepackage[english]{babel}
\usepackage{graphicx}
\usepackage{subcaption}
\usepackage{amssymb}  
\usepackage{bbm}
\usepackage{ragged2e}
\usepackage[letterpaper,top=2cm,bottom=2cm,left=3cm,right=3cm,marginparwidth=1.75cm]{geometry}

\usepackage{bbm}
\usepackage{amsfonts}
\usepackage{graphicx}
\usepackage[colorlinks=true, allcolors=blue]{hyperref}
\usepackage{booktabs}
\usepackage{makecell}
\usepackage{setspace}
\usepackage{float}
\usepackage[authoryear]{natbib}
\usepackage{amsmath,amssymb,amsthm,mathtools}
\newtheorem{lemma}{Lemma}

\newcommand{\pr}{\Pr}
\newcommand{\be}{\begin{equation*}\begin{aligned}}  
\newcommand{\ee}{\end{aligned}\end{equation*}}

\title{The Effect of Age at Arrival on the Alignment Between Immigrant and Native-Born Gender Norms: A Distributional Approach
}
\author{Nadav Kunievsky\footnote{Knowledge Lab, University of Chicago. I thank Natalia Goldshtein for extremely helpful discussions.}}
\begin{document}
\maketitle

\begin{abstract}
This paper examines how age at migration affects cultural assimilation by studying convergence in gender role attitudes between immigrants and the UK-born population. Although cultural values are central to policy debates about integration and social cohesion, most work on migration timing focuses on economic outcomes, leaving effects on values and beliefs far less explored. We address this gap by combining a sibling design with a distributional framework for measuring attitude convergence. Using the UK Household Longitudinal Study, we compare siblings within the same family who arrived in the UK at different ages, exploiting within-family variation to identify the causal effect of childhood exposure to host-country norms. To measure convergence, we compare the full distributions of ordinal survey responses to questions on gender norms for immigrants and locals. Our distance metric is the total variation distance between response distributions. TV has a clear policy-relevant interpretation: it equals the worst-case difference in mean responses over all bounded scoring rules. We then use our estimates to construct two measures of how migration timing changes this distance. The first asks how large the immigrant–UK-born TV distance would be if every immigrant had arrived at birth, and compares it to the observed distance. The second is a marginal measure that asks how the distance changes under a small uniform shift in arrival ages. Our results show that if all immigrants had arrived at birth, the cultural distance between immigrants and locals would decrease substantially, and that marginal increases in migration age incrementally widen this gap. These effects are considerably stronger for immigrants from culturally distant origins. Overall, the findings highlight the importance of early-life exposure in shaping cultural beliefs and provide a robust, broadly applicable framework for quantifying convergence in survey responses.

\end{abstract}

\newpage

\section{Introduction}
Migration has long been recognized as a transformative process, reshaping not only immigrants' economic prospects but also their cultural identities and social attitudes. A well-established body of research documents how age at migration affects economic outcomes such as education, employment, and earnings, consistently finding that younger immigrants integrate more successfully into labor markets \citep{BleakleyChin2004,VanOursVeenman2006}. However, much less is known about how migration—particularly the age at which it occurs—affects more subjective aspects of integration, such as values, attitudes, and cultural beliefs.

This gap is especially salient given the central role of values in shaping social cohesion and integration. Policymakers frequently emphasize the alignment of immigrants' beliefs with those of the host society, whether through citizenship tests, integration programs, or political debates about shared values. Questions about gender equality, tolerance, and civic principles are often foregrounded as markers of successful assimilation. Despite this policy relevance, the literature has largely overlooked how age at migration shapes cultural assimilation, and in particular, attitudes toward gender roles.
This paper addresses this gap by examining the causal effect of migration age on gender role attitudes among immigrants in the United Kingdom (UK). 

Studying the effect of migration age on cultural attitudes presents a distinct measurement challenge compared to economic outcomes. Unlike earnings or education—where having more is almost universally considered better—there is no universally optimal attitude toward gender roles. Different policymakers may hold different views about what constitutes ideal beliefs and how differences between groups should be weighted. Moreover, in many cases what matters for social cohesion is not the absolute level of adherence to particular cultural norms, but rather the degree of alignment between immigrant and local attitudes. We therefore need measures that can accommodate disagreement about ideal attitudes and quantify cultural distance in a policy-relevant way.

In this paper we develop a distributional framework for measuring convergence that avoids imposing arbitrary numerical values on survey responses. Rather than comparing mean responses across groups—which requires assuming specific weights for each response category—we adopt the Total Variation (TV) distance between response distributions. This measure has two complementary interpretations that make it particularly valuable for policy analysis of cultural differences. First, it represents the maximum possible divergence in attitudes across all valid scoring functions, providing a worst-case average difference that is robust to how responses are coded. Second, it quantifies the minimal share of immigrants who would need to change their responses for their distribution to match that of locals, offering an intuitive coupling interpretation.

Motivated by these interpretations, we introduce two causal measures of how migration age affects cultural convergence. The first evaluates a counterfactual in which all immigrants are assumed to have arrived at age zero—effectively born in the UK. This measure captures how much closer immigrants' attitudes would be to locals' if all migration occurred before birth, providing a conceptual benchmark for the total effect of childhood exposure. The second measure, the Marginal Total Variation Divergence (MTVD), captures the marginal effect of migration age on cultural distance. It quantifies how a small uniform increase in migration age shifts the similarity between immigrants' and locals' attitudes, offering a policy-relevant metric for evaluating interventions that might induce earlier migration.

To identify the causal effect of migration age, we exploit within-family variation by comparing siblings who migrated at different ages. This sibling fixed-effects design controls for unobserved family-level heterogeneity—including parental values, origin-country context, and the circumstances surrounding migration—that jointly shape both migration timing and gender attitudes. The key identifying assumption is that, conditional on family membership, differences in siblings' arrival ages primarily reflect birth timing relative to the family's migration date rather than child-specific characteristics that independently affect attitudes. 

We focus on childhood migration because children arrive during formative stages of socialization and identity development, making them particularly susceptible to host-country cultural norms. Immigrants who arrive at younger ages may internalize the gender norms of the host society more deeply than those who arrive later in life, when attitudes are already more firmly established \citep{Rumbaut2004}.

We apply this framework using data from the UK Household Longitudinal Study (UKHLS), which provides information on migration history, family relationships, and attitudes toward gender roles. Our empirical analysis examines responses to four questions capturing beliefs about working mothers, gender-based household specialization, and dual income responsibilities. To examine heterogeneity by cultural origin, we combine UKHLS data with information from the World Values Survey, constructing measures of origin-country gender norms and their distance from UK norms.

Applying this sibling fixed-effects design and our distributional measures, we find that earlier arrival leads to substantial convergence in gender-role attitudes toward the UK-born distribution, with larger effects for immigrants from culturally distant origins. We start by documenting the size of the initial disagreement between immigrants and UK-born locals. First, the TV distance between immigrants and locals ranges from 0.047 to 0.148 across the four gender attitude questions. This means that between 5\% and 15\% of immigrants would need to change their responses for the distributions to align perfectly with locals. Equivalently, this distance captures the greatest gap in attitudes that could be obtained by assigning any bounded set of weights to the response categories and comparing the resulting average score across immigrants and UK-born locals. Because it takes the worst case over all such weighting schemes, it does not depend on any particular normative choice about how to value agreement relative to disagreement. For policymakers, this is useful because it provides an upper bound on how far the two groups could differ under any reasonable policy-relevant scoring of the responses, so conclusions about “how different” attitudes are do not hinge on adopting a specific (and potentially contested) weighting scheme chosen by researchers.

Second, we estimate that if all immigrants in our sample had been born in the UK (i.e., setting migration age to zero), the TV distance would decline substantially. For the statement "A husband's job is to earn money, a wife's job is to look after the home and family," the TV distance would fall from 0.148 to 0.063, a reduction of 0.085, representing a 57\% decrease in cultural distance. This counterfactual exercise reveals that migration timing explains over half of the observed divergence for this dimension of gender attitudes. Similarly, for the statement about family suffering when mothers work, the TV distance would decline by 0.054 (a 39\% reduction). These findings suggest that early childhood exposure to UK norms plays a substantial role in shaping immigrants' gender attitudes.

Third, our MTVD estimates indicate that a uniform one-year increase in migration age across the entire immigrant population would increase cultural distance by 0.014-0.034 units, across questions 1, 2, and 4. The largest marginal effect (MTVD = 0.034) occurs for attitudes about traditional household division of labor. To interpret the magnitude: this implies that delaying migration by one year increases the minimum fraction of immigrants who would need to change their responses by approximately 3.4 percentage points. Put differently, policies or circumstances that induce families to migrate one year earlier would close roughly one-fifth to one-quarter of the baseline cultural gap on this dimension. 

Finally, we document heterogeneity by cultural origin. For immigrants from Western Europe, the United States, or Canada, countries with gender norms relatively similar to the UK, we find essentially no effect of migration age on convergence. The Total Variation distance between these immigrants and UK-born respondents is already small (0.08-0.14), and the counterfactual reduction from setting migration age to zero is statistically indistinguishable from zero. By contrast, for immigrants from the rest of the world, the baseline TV distance is substantially larger (0.19-0.20 for three of the four questions), and the counterfactual reduction from earlier migration is both statistically significant and economically large. For example, among non-Western immigrants, setting migration age to zero would reduce the TV distance for traditional household roles from 0.20 to 0.10—a 50\% decline. The MTVD for this group is 0.039, compared to effectively zero for Western immigrants. This heterogeneity is consistent with a cultural exposure mechanism: when origin-country norms diverge more from the destination, the timing of exposure matters more.

\paragraph{Related Literature}  Our paper connects three literatures: (i) work on how \emph{age at migration} shapes integration outcomes, (ii) research on \emph{cultural assimilation and exposure} over the life course, and (iii) studies of \emph{immigrants and gender norms}.

A large economics literature shows that arriving earlier in the destination country improves standard integration outcomes, consistent with childhood being a sensitive period for the accumulation of destination-specific human capital. Classic evidence emphasizes language: \citet{BleakleyChin2004} use variation in age at arrival interacted with origin-language distance to identify large effects of childhood arrival on English proficiency and downstream socioeconomic outcomes. Related work documents sharp gradients in education and earnings by arrival age, including in Canada and Europe \citep{SchaafsmaSweetman2001,VanOursVeenman2006,Gonzalez2003}. More recent research exploits richer administrative data and within-family designs to strengthen causal interpretation. For example, \citet{Bohlmark2008} study siblings in Sweden and show that arriving later substantially harms school performance, while \citet{Hermansen2017} documents large long-run penalties for later childhood arrival. Closest to our identification strategy, \citet{LemmermannRiphahn2018} estimate the causal effect of age at migration on youth educational attainment using a sibling fixed-effects design.

Beyond economic outcomes, a broad literature studies assimilation as the evolution of language, identity, and social ties. \citet{BleakleyChin2010} show that age at arrival affects not only English proficiency but also measures of social assimilation, such as marriage decisions, fertility and residential location, consistent with exposure during formative years shaping integration trajectories. Sociological work similarly emphasizes that migration during different life stages generates distinct ``generational cohorts'' and patterns of acculturation \citep{Rumbaut2004}. Complementary evidence on assimilation over longer horizons appears in work on historical immigration, which documents substantial but incomplete convergence in economic outcomes across generations and cohorts \citep{AbramitzkyBoustanEriksson2014}, and in research emphasizing the imperfect portability of skills across destinations \citep{Friedberg2000}. Taken together, these studies motivate treating age at arrival as a proxy for the timing and intensity of exposure to the host-country environment. 

A separate literature studies how beliefs about gender roles persist and evolve across contexts. Historical perspectives argue that gender norms are deeply rooted in economic and cultural conditions \citep{AlesinaGiulianoNunn2013,InglehartNorris2003}. \citet{Fernandez2007} shows that source-country culture predicts outcomes and beliefs among descendants of immigrants and \citet{FernandezFogli2009} document that gender-related beliefs and behaviors (e.g., female work and fertility) partly reflect inherited cultural attributes from the origin country. More broadly, models and evidence on cultural transmission emphasize that preferences can persist across generations through socialization within families and communities \citep{BisinVerdier2001,FarreVella2013}. Direct evidence on immigrant gender attitudes points to an interaction between inherited culture and host-country assimilation forces: \citet{BlauEtAl2015} study gender roles of immigrants, by tracking fertility decisions and education choices, among immigrants and find patterns consistent with both selection and assimilation, while \citet{RoderMuhlau2014} show that immigrants in Europe exhibit gender attitudes that shift toward host-country norms with exposure. 


\paragraph{Roadmap.} The remainder of the paper proceeds as follows. Section \ref{sec:data} describes the data. Section \ref{sec:empircalStragey} outlines our measurement framework and empirical strategy. Section \ref{sec:results} presents results and robustness checks. Section \ref{sec:conclusions} concludes.

\section{Data}\label{sec:data}
\subsection{Data and Sample Construction}
We use the UK Household Longitudinal Study (Understanding Society; henceforth, UKHLS), a nationally representative panel survey of UK households fielded from 2009 to 2021. The survey is particularly valuable for our purposes because it added an Immigrant and Ethnic Minority Boost sample in 2015, improving coverage of the immigrant population.  We identify immigrants using the survey’s information on whether respondents report a year of arrival to the UK and their country of birth, and we measure age at arrival by subtracting year of birth from the reported year of arrival. We focus on childhood arrivals by restricting attention to individuals whose implied arrival age is no greater than eighteen, aligning the analysis with the idea that exposure during formative years is the relevant margin for cultural socialization. 

As we discuss in section \ref{sec:empircalStragey}, our empirical design compares siblings within the same family, absorbing all time-invariant family background that jointly shapes migration decisions and gender attitudes. We therefore identify sibling links using the UKHLS family relationship matrix, that records kinship ties between panelists, and we build a family identifier that groups together all individuals connected through those reported sibling relationships.  Therefore, the estimation sample consists of families with at least two siblings observed in the attitude modules. This yields roughly three hundred immigrant families and about seven hundred sibling responses per question.  We follow a similar procedure when identifying families of UK-born local panelists, which yields around 2,775 families and 6,250 siblings (See table \ref{desc_locals} in the Appendix).

To measure gender attitudes, we use four questions that were administered in waves 2, 4, and 10 of the panel.  Respondents rate their agreement on a five-point scale (from strongly disagree to strongly agree) with the following four statements:
\begin{itemize}
    \item whether a pre-school child is likely to suffer if their mother works
    \item whether family life suffers when the woman has a full-time job
    \item whether both spouses should contribute to household income
    \item whether a husband’s job is to earn money while a wife’s job is to look after the home and family
\end{itemize}
When respondents answer the same statement in multiple waves, we use the most recent observed response so that each individual contributes a single, up-to-date attitude measure in the main analysis. 

Finally, we study heterogeneity by origin-country context by combining UKHLS with external information on gender norms in sending countries.  For each origin country, we use Wave 7 of the World Values Survey (conducted 2017–2020) to construct an origin-country profile based on average responses to multiple gender-related statements spanning beliefs about mothers’ paid work and child wellbeing, gendered claims about political and business leadership, education priorities for boys versus girls, norms under job scarcity, and beliefs about marital strain when wives earn more.  We then summarize cultural proximity to the UK by computing the distance between each origin-country response vector and the corresponding UK vector, and we use this distance to split the sample into culturally “closer” and “farther” origin-country groups for heterogeneity analysis (See Appendix for more details \ref{appendix:similarity}). 

In addition to our main results, we also demonstrate that the questions about attitudes are also correlated with within-household specialization in both childcare and routine housework. We construct a respondent-level dataset, from the UKHLS data, that links the gender-attitudes items to measures of who mostly performs core housework and core childcare tasks. We restrict attention to individuals who answered the gender-attitudes questions, and limit the sample to respondents who are observed living with a spouse or partner and who have at least one child under age 18 in the household. Because the domestic-division questions are not asked (or not answered) in every wave for every respondent, we collapse the panel information to a single observation per person by using the most recent non-missing response available for each domestic task and for the household-composition restrictions. This produces a consistent “partnered with children” sample for which both housework and childcare allocations are measured and comparable across the attitude questions.

The main outcomes are binary indicators for whether the woman in the couple is reported to do most of a given task (cooking, cleaning, laundry/ironing, or specific childcare activities). These indicators are constructed from respondents’ reports about who typically performs each task: the outcome equals one when the responses indicate that the woman “usually” or “always” performs the task, equals zero when the man “usually” or “always” performs it or when the task is reported as shared about equally, and is set to missing when the task is performed by someone else, by someone outside the household, is not applicable, or is otherwise missing.

\subsection{Empirical Patterns}

Table \ref{table:descriptves} provides descriptive statistics for the siblings sample used to address the four survey questions. The first row indicates that the average migration age is 5 years, with the average migration year being 1998. Respondents were, on average, 24 years old at the time of the interview, with most responses collected during the 10th wave of the survey. The average family includes 2.8 children, counting all siblings regardless of survey participation. The sample comprises 305 families with at least two children who responded to the gender role questions, yielding approximately 729 sibling observations. Importantly, the sample is consistent across all questions.

Table \ref{table:descriptves} further shows that the sex ratio of respondents is roughly balanced. Table \ref{table:originCont} in the Appendix details the geographic origins of the immigrants, revealing that the majority originate from Asia, followed by Europe and Africa. Finally, the geographic distribution of the siblings sample closely mirrors that of the broader UKHLS immigrant population under the age of 18. 

\begin{table}[!h]
\footnotesize
    \centering
        \resizebox{\textwidth}{!}{
        \footnotesize
    \begin{tabular}{lcccc}
\hline
\textbf{} & \makecell{\textbf{Pre-School} \\ \textbf{Suffers} \\ \textbf{if Mother Works}} & \makecell{\textbf{Family Suffers} \\ \textbf{if Mother Works} \\ \textbf{Full Time}} & \makecell{\textbf{Husband and Wife} \\ \textbf{Should Contribute} \\ \textbf{to HH Income}} & \makecell{\textbf{Husband Should} \\ \textbf{Earn,} \\ \textbf{Wife Stay at Home}} \\
\hline
Migration Age  & 5.0 & 5.0 & 4.9 & 5.0 \\ 
  & (5.3)  & (5.3)  & (5.3)  & (5.3)  \\ 
Migration Year  & 1998.2 & 1998.2 & 1998.2 & 1998.2 \\ 
  & (9.0)  & (9.0)  & (9.0)  & (9.0)  \\ 
Age at Interview   & 23.9 & 24.0 & 23.9 & 23.9 \\ 
  & (7.6) & (7.6) & (7.7) & (7.6) \\ 
Sex  & 0.51 & 0.51 & 0.51 & 0.51 \\ 
  & (0.50) & (0.50) & (0.50) & (0.50) \\ 
Wave b   & 0.12 & 0.12 & 0.12 & 0.12 \\ 
  & (0.33) & (0.33) & (0.33) & (0.33) \\ 
Wave d   & 0.30 & 0.30 & 0.30 & 0.30 \\ 
   & (0.46) & (0.46) & (0.46) & (0.46) \\ 
Wave j   & 0.57 & 0.57 & 0.58 & 0.57 \\ 
   & (0.49) & (0.49) & (0.49) & (0.49) \\ 
Average Score  & 2.99 & 3.27 & 2.07 & 3.56 \\ 
  & (1.06) & (1.14) & (0.92) & (1.19) \\ 
Family Size  & 2.8   & 2.8   & 2.8   & 2.8   \\ 
      & (1.1)   & (1.1)   & (1.1)   & (1.1)   \\ 
Num. Families  & 305.0   & 304.0   & 303.0   & 304.0   \\ 
Num. Obs  & 729.0  & 727.0  & 726.0  & 728.0  \\ 
\bottomrule
\end{tabular}

    }
    \caption{Descriptive Statistics}\label{table:descriptves}
    \label{tab:my_label}
    \begin{minipage}{\linewidth}
    \footnotesize
    \justifying
    \textit{Note:} Descriptive statistics for the immigrant-siblings sample, shown separately for each of the four attitude questions (columns (1)–(4): Pre-School Suffers if Mother Works; Family Suffers if Mother Works Full Time; Husband and Wife Should Contribute to Household Income; Husband Should Earn, Wife Stay at Home). Rows report the mean of key variables (Migration Age, Migration Year, Age at Interview, Sex, survey-wave indicators, Average Score, and Family Size), with standard deviations in parentheses. The bottom rows report the number of families and number of observations used for each column.
    \end{minipage}
\end{table}


Figure \ref{fig:answerDistribution} illustrates the distribution of responses in our immigrant siblings sample (the blue bars), compared to those of locals (the green bars). Substantial differences emerge across the four questions, with immigrants showing a higher likelihood of agreeing with the first, second, and fourth statements relative to locals, suggesting that immigrants are less likely to hold egalitarian views toward gender roles.

Table \ref{table:responseByGroup} in the Appendix provides a detailed comparison of response distributions across four groups: the immigrant siblings sample, the overall immigrant population in the UKHLS data, locals, and a subset of locals restricted to those with at least one sibling in the UKHLS sample. The results indicate that immigrant siblings are more likely to disagree with the statements compared to the overall immigrant population. Moreover, when comparing immigrant siblings to the equivalent local sibling sample, immigrant siblings show a greater tendency to agree with the four statements. A similar pattern is observed when comparing the overall immigrant population in the sample to the overall local population.  


An important difference between our immigrant siblings sample and the overall immigrant sample is respondent age. As shown in Appendix Table \ref{table:ageByGroup}, the median age in the immigrant siblings sample is 22, compared with 43 in the broader immigrant population. A similar pattern holds for locals: the median age in the local-siblings sample is 23, compared with 48 in the general local population. Therefore, restricting the sample to respondents with siblings yields a substantially younger local (and immigrant) sample.

Age is strongly associated with responses to the four survey statements. Appendix Figure
\ref{fig:immigrants_Age_answers_relation} plots all immigrant respondents’ age at the time of the interview against the
\emph{mean coded response} for each statement, where responses are coded as: ``Strongly agree'' = 1,
``Agree'' = 2, ``Neither agree nor disagree'' = 3, ``Disagree'' = 4, and ``Strongly disagree'' = 5 (so lower
average scores indicate greater agreement). Among the general immigrant population, older respondents
tend to have lower average scores (greater agreement) on the first, second, and fourth statements, for
which agreement corresponds to more traditional (less gender-equal) views. By contrast, for the third
statement (``Husband and wife should contribute to household income''), agreement is the more
egalitarian position, and older respondents tend to have higher average scores (less agreement). Figure \ref{fig:locals_Age_answers_relation} displays similar patterns among local respondents. Taken
together, these patterns imply that, as people age, they are more likely to hold and express beliefs that
are not gender-equal, as captured by these items and their coding. Equivalent patterns are evident among
local respondents. Therefore, since our identification strategy restricts the sample to respondents with
siblings, we maintain comparability by applying the same restriction to locals and immigrants, and we
compare immigrant siblings to local siblings in the main analysis.

Figure \ref{fig:migrationAge_allMig} presents a binscatter showing the relationship between migration age
and the mean coded response for all immigrants in the UKHLS sample who moved to the UK before the
age of 18, using the same coding as above (lower average scores indicate greater agreement). The figure reveals a strong negative relationship for the first, second, and fourth statements, indicating that immigrants who migrate at older ages are more likely to agree with these statements. This suggests a greater likelihood of holding traditional beliefs about gender roles, such as the view that women should not participate in the
labor market and should focus on domestic responsibilities. The relationship is particularly strong for questions addressing whether the family and children suffer when women work and whether the husband should work while the wife takes care of the home.  The interpretation differs for the statement ``Husband and wife should contribute to household income,'' for which agreement is the more egalitarian position. For this item, the association with migration age is
negative but notably weaker, implying at most a small increase in agreement with this egalitarian statement (and thus a much weaker relationship than for the items where agreement reflects traditional views).

Figure \ref{fig:migrationAge_siblings} shows the same binscatter for our siblings sample. While the pattern reflects the negative association between agreement with the statements and migration age observed in the broader immigrant sample, the relationship is weaker. Specifically, the association between migration age and agreement with the statement that both husband and wife should contribute to household income is almost negligible in the siblings sample.

These descriptive patterns imply that migrating at an earlier age is associated with holding less traditional gender-role attitudes. However, these descriptive relationships should not be interpreted causally: migration age is correlated with many other determinants of gender attitudes, including cohort and period of arrival, country (and region) of origin, parental education and religiosity, selective migration and return migration, and the local environments immigrants are exposed to upon arrival (e.g., neighborhood composition and school quality). 

\begin{figure}[!h]
    \centering
    \includegraphics[scale=1]{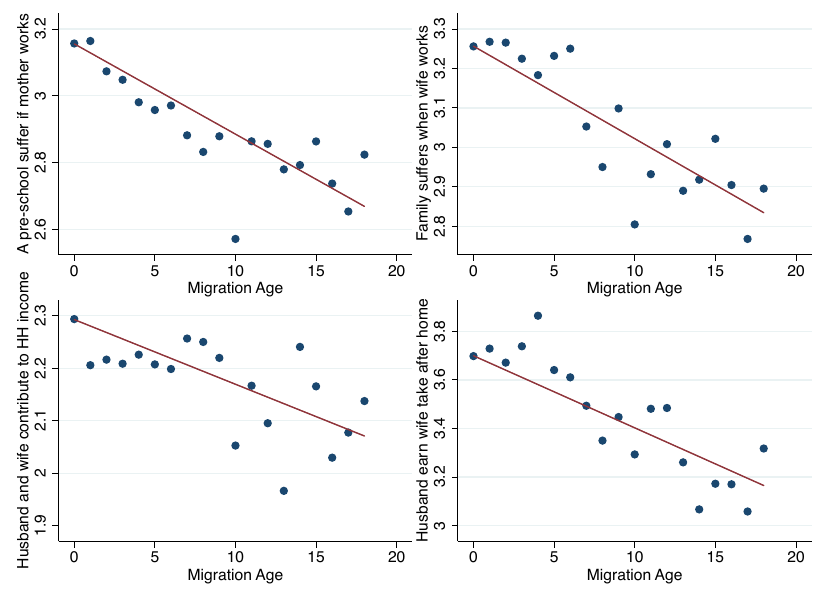}
    \caption{Relation between migration age and attitude towards Gender role}
    \label{fig:migrationAge_allMig}
    \begin{minipage}{\linewidth}
    \footnotesize
    \justifying
    \textit{Note:} Four-panel binscatter of migration age (x-axis, in years) against the mean coded response (y-axis) for each gender-role statement, using all immigrants in UKHLS who moved to the UK before age 18. Responses are coded on a five-point scale: Strongly agree = 1, Agree = 2, Neither agree nor disagree = 3, Disagree = 4, Strongly disagree = 5. Each panel overlays a fitted linear trend line on the plotted age-specific mean responses.  
    \end{minipage}
\end{figure}

Next we examine how immigrants’ gender-role attitudes vary with the gender norms of their country of origin. Appendix Figure \ref{fig:distance_allMig} plots immigrants’ agreement with each survey statement against the distance between origin-country gender attitudes and those in the UK. Immigrants from countries with norms that are further from the UK tend to agree more with the statements, consistent with more traditional gender-role views. The exception is the statement that both spouses should contribute to household income, for which the relationship is only weakly negative. Appendix Figure \ref{fig:distance_siblings} shows the same pattern in our siblings sample. 

Finally, Figure \ref{fig:divsionHH} shows that responses to the gender-attitudes questions are reflected in concrete household specialization. We focus on UKHLS respondents who both answered the gender-attitudes questions and are observed as married with children in at least one wave of our sample, but not necessarily immigrants. For this population, we measure whether the woman (rather than the man) performs most of various tasks related to childcare and core household chores (cleaning, cooking, and laundry). Specifically, we estimate the coefficient vector $\beta_r$ in the regression
\[
\text{Woman does task k}_i = \alpha +
\sum_{\substack{
r \in \{\text{Agree, Neither agree nor disagree,}\\
\text{Disagree, Strongly disagree}\}
}}
\beta_{r,k}\,\mathbbm{1}[\text{Response}=r] + \varepsilon_i .
\]
where $\text{Woman does task k }_i$ is an indicator for whether the woman in respondent $i$'s household performs most of the given task $k$, and $\mathbbm{1}[\text{Response}=r]$ indicates that the respondent selected response category $r$ (with the omitted category being “Strongly agree”). The resulting coefficients trace out a clear gradient: moving away from traditional responses is associated with a lower likelihood that the woman performs most routine chores and childcare. Importantly, the direction of this gradient lines up with the normative content of the items. For the three statements where “Strongly agree” corresponds to a more traditional view of gender roles (Panels \ref{fig:a}, \ref{fig:b}, and \ref{fig:d}), the coefficients become increasingly negative as respondents shift toward disagreement, indicating less household specialization onto women. By contrast, for the statement “Husband and wife should contribute to household income” (Panel \ref{fig:c}), “Strongly agree” is the more egalitarian position, and the pattern correspondingly flips sign: moving toward disagreement is associated with greater specialization onto women. Taken together, these correlations suggest that the attitude measures are not merely expressive survey responses; they covary in a systematic and substantively intuitive way with within-household specialization in the domain most directly connected to the questions.

\begin{figure}[!htbp]
\centering
\begin{subfigure}[t]{0.48\textwidth}
  \centering
  \includegraphics[width=\linewidth]{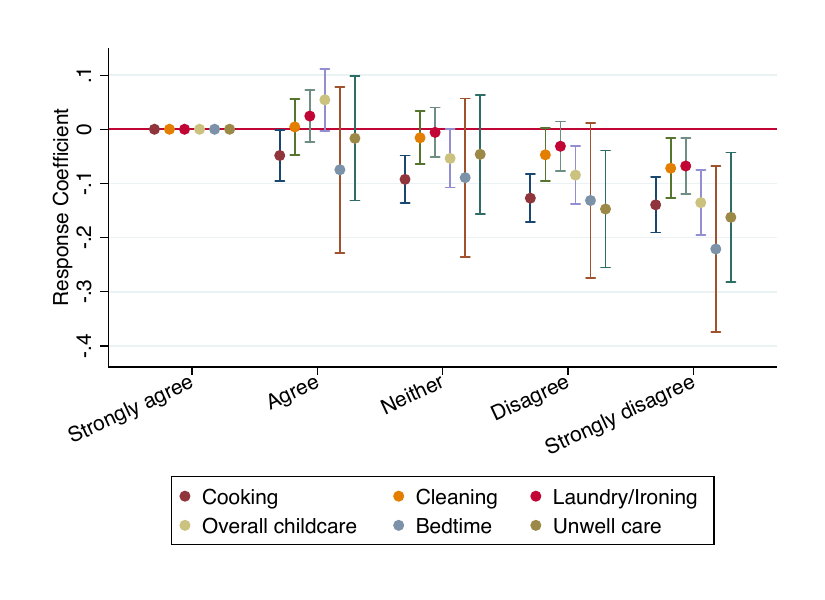}
  \caption{Pre-school child suffers if mother works}
  \label{fig:a}
\end{subfigure}\hfill
\begin{subfigure}[t]{0.48\textwidth}
  \centering
  \includegraphics[width=\linewidth]{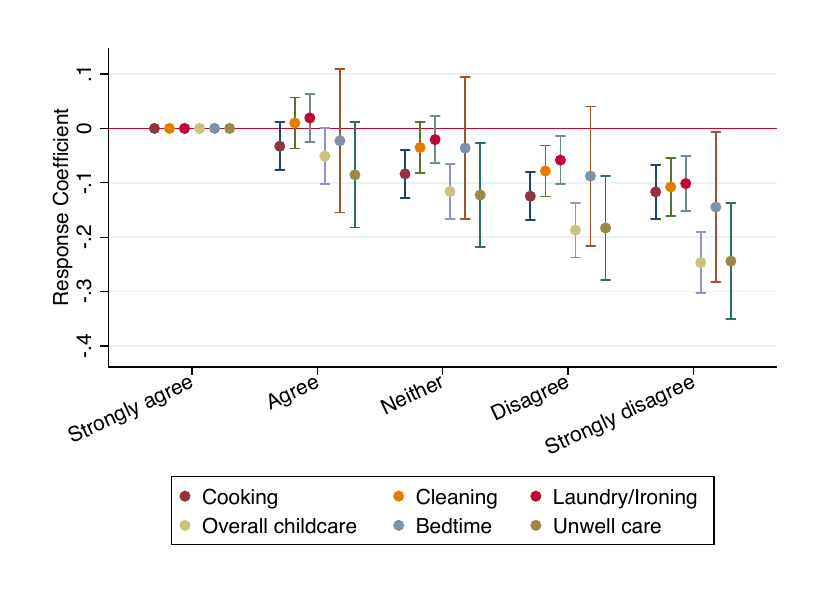}
  \caption{Family suffers if mother works}
  \label{fig:b}
\end{subfigure}

\vspace{0.6em}

\begin{subfigure}[t]{0.48\textwidth}
  \centering
  \includegraphics[width=\linewidth]{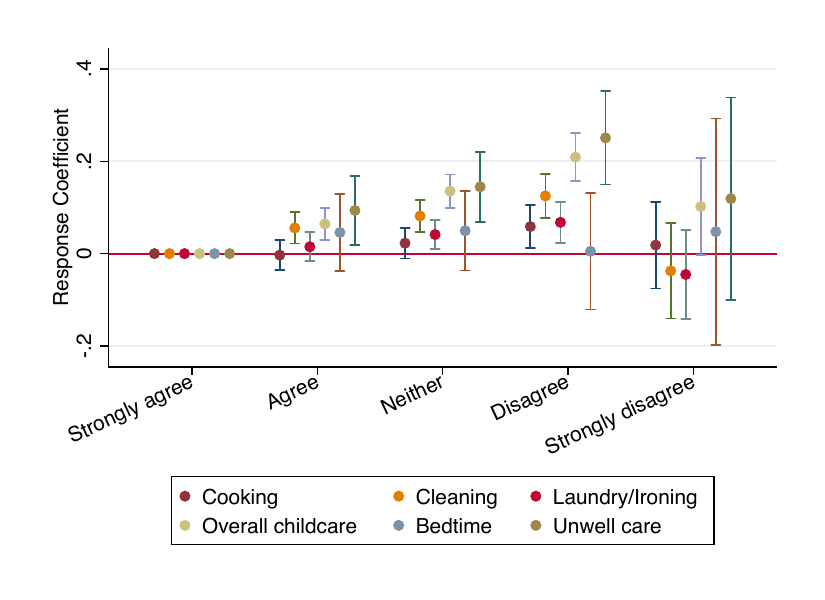}
  \caption{Husband and wife should contribute to household income}
  \label{fig:c}
\end{subfigure}\hfill
\begin{subfigure}[t]{0.48\textwidth}
  \centering
  \includegraphics[width=\linewidth]{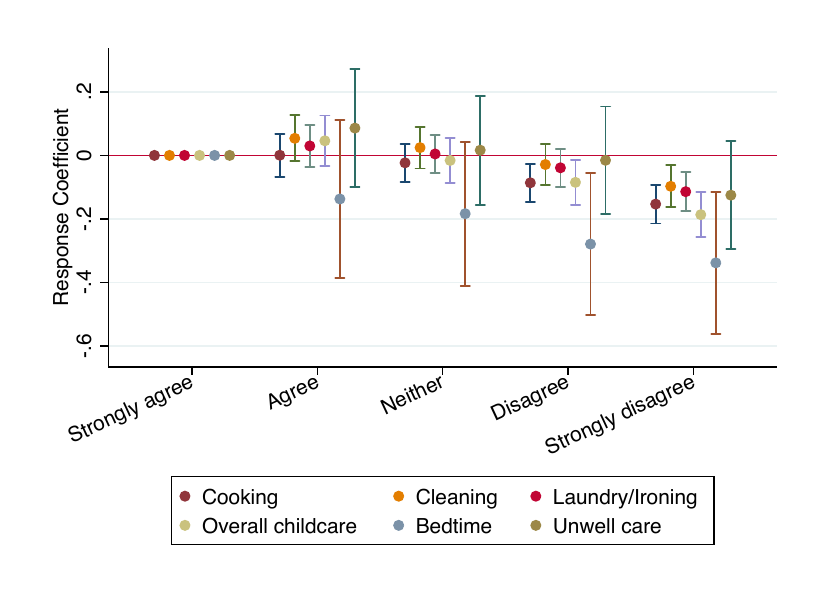}
  \caption{Husband should earn; wife should stay at home}
  \label{fig:d}
\end{subfigure}

\caption{Relationship Between Gender-Role Attitudes and Household Specialization}
    \begin{minipage}{\linewidth}
    \footnotesize
    \justifying
    \textit{Note:} Four-panel coefficient plot linking gender-role attitudes to within-household task specialization in the “partnered with children” sample (respondents who answered the attitude questions and are observed married with children in at least one wave; not restricted to immigrants). For each panel (one attitude statement), the x-axis lists response categories, and the y-axis (“Response Coefficient”) reports estimated coefficients $\beta_{r,k}$  from a regression of an indicator for whether the woman does most of task \(k\) on dummies for response categories \(r \in \{\text{Agree, Neither, Disagree, Strongly disagree}\}\), with “Strongly agree” omitted. Separate series (colored markers) correspond to tasks: cooking, cleaning, laundry/ironing, overall childcare, bedtime, and unwell care. Vertical whiskers around each point the robust standard errors, with 95\% confidence. 
\end{minipage}
\label{fig:divsionHH}

\end{figure}

\section{Identification and Empirical Strategy}\label{sec:empircalStragey}
Our goal is to estimate the causal effect of age at arrival on individuals’ attitudes toward gender roles, but a simple comparison of early versus late arrivals is unlikely to recover it because migration timing is confounded with family characteristics and circumstances. On one hand, highly educated parents may move to the UK to maximize their children’s educational opportunities and therefore migrate when children are young; since higher education is correlated with progressive values, this can make earlier arrival appear associated with more egalitarian attitudes even absent a causal effect. On the other hand, financially constrained or crisis-driven families may migrate only after delays, so their children arrive older; if those constraints or disruptions also shift norms (e.g., through economic necessity or weakened traditional enforcement), later arrival may appear more progressive for reasons unrelated to age at arrival. Either way, unobserved determinants of migration timing correlate with attitudes, so the naive early–late comparison can be biased in either direction.

To address these concerns, we adopt a sibling design (e.g., \cite{chetty2018impacts, LemmermannRiphahn2018}), comparing siblings within the same family who differ in their age at arrival in the UK. The appeal of this approach is that it holds fixed the entire bundle of family background and parental selection into migration: siblings share parents, an origin country, and the same migration episode, so the comparison is not between “different kinds of families,” but between children who experienced that same move at different points in childhood. In this sense, within-family variation lets us control for family-level confounding factors such as stable family characteristics, preferences, and long-run plans—and focus on differences in exposure that arise within the family.

To implement this within-family comparison, we employ a family fixed-effects approach. Let \(Y_i\) denote a survey response outcome for individual \(i\), \(MigAge_i\) the age at which individual \(i\) arrived in the UK, and \(f(i)\) the family of individual \(i\). We estimate the parameter \(\beta\) in the following linear model: 
\[
Y_i = MigAge_i\beta + \gamma_{f(i)} + X_i\delta + u_i,
\]
where \(\gamma_{f(i)}\) is the family fixed effect for \(i\), \(X_i\) represents additional controls discussed in the robustness section, and \(u_i\) captures unobserved factors influencing the outcome. This specification compares siblings within the same family while absorbing any time-invariant family-level determinants of both migration timing and gender attitudes.

Our main identification assumption for $\beta$ is that, conditional on family membership (and on the additional controls $X_i$ included in the regression), the within-family component of age at arrival is uncorrelated with unobserved individual determinants of gender attitudes:
\begin{equation}\label{eq:assumption1}
E\!\left[\Big(MigAge_i-E[MigAge_i\mid f(i),X_i]\Big)\,u_i \,\middle|\, f(i),X_i\right]=0.
\end{equation}
The intuition is straightforward in the common case where a family migrates once: siblings' arrival ages differ largely because they were born in different years relative to the family's migration date. The identifying claim is therefore \emph{not} that migration timing is random in the population, but that, once we condition on the family (and on $X_i$), it is not systematically the case that the sibling who happened to arrive relatively older (or younger) is also the sibling with unobserved traits that would have pushed their gender attitudes in a particular direction regardless of migration age.\footnote{Condition \ref{eq:assumption1} is the orthogonality restriction that identifies $\beta$ in the family fixed-effects regression with controls. Our identifying story can be better expressed in the ``design-based'' form, $E[MigAge_i \mid u_i,f(i)]=\mu_{f(i)}$, which formalizes the idea that migration timing is determined at the family level and is not driven by child-specific latent traits. Put differently, conditional on family information, learning additional child traits does not improve prediction of when the family migrated and therefore at what age each child arrived (see, e.g., \citealp{borusyak2024negative}). We maintain Condition \ref{eq:assumption1}, as it is the weaker assumption, implied by the design-based assumption. }

This identifying story is plausible because the forces that determine when a household migrates are typically household-level and therefore common to all siblings. Migration timing is largely governed by considerations that operate at the household level and apply similarly to all children in the family. Families typically migrate in response to a combination of parental opportunities and constraints—labor-market prospects, visa availability, the cost of relocating, housing constraints, the presence of relatives or social networks in the destination, and the broader political or economic environment in the origin country—together with preferences over where to raise children and how to invest in them. These determinants can be strongly related to both the decision to migrate and to parental values, but crucially they are shared across siblings and therefore absorbed by the family fixed effect. Moreover, even when parents have child-centered motives for migrating (for example, better schooling or safer neighborhoods), these motives commonly target the well-being of the household as a whole rather than the idiosyncratic traits of one particular child. In that setting, within-family differences in age at arrival primarily reflect birth timing relative to a family-level migration date, not strategic re-timing of migration in response to sibling-specific latent traits.

Moreover, the migration date is often influenced by discrete events that are plausibly orthogonal to child-specific unobservables, such as a job offer that materializes at a particular time, the resolution of administrative and visa processes, or a sudden deterioration (or improvement) in conditions in the origin country. Even absent such shocks, migration often involves planning horizons and constraints that make fine-tuning the move to a specific child's latent disposition unlikely: parents can rarely condition the timing of a move on subtle within-family differences in children's attitudes, and those attitudes are typically not well revealed in early childhood, when much of the identifying variation in age at arrival is generated. For our setting, the identifying restriction therefore amounts to a ``no within-family sorting on latent determinants of gender attitudes'' assumption: conditional on family membership and the basic controls $X_i$ (which capture the most direct mechanical within-family confounds), the residual variation in age at arrival across siblings is as good as random with respect to $u_i$.

Several potential threats to this identifying restriction warrant attention. A first concern is that within-family differences in \(MigAge_i\) are naturally intertwined with birth order: older siblings tend to be older at arrival, and birth order itself may shape attitudes through parental expectations, household responsibilities, or shifting family circumstances. For example, families may face tighter budgets early on and accumulate wealth or social capital over time, so older siblings could grow up under greater strain while younger siblings benefit from improved conditions; these forces could generate within-family attitude gaps even if arrival age had no causal effect. To address this, our robustness checks include an indicator for being the oldest sibling, which helps separate birth-order effects from the role of exposure age. Closely related, siblings are often interviewed at different ages, and age at interview may independently affect reported attitudes through life-cycle changes in beliefs; moreover, siblings may be surveyed in different waves, and broader period effects could also shift responses. If age-at-interview or interview timing is mechanically correlated with within-family differences in \(MigAge_i\), estimates of \(\beta\) could confound exposure age with these measurement and period channels. We therefore examine robustness to flexible controls for age at interview and to accounting for interview timing in the survey.

A second concern is that child-specific traits could influence parental migration decisions in a way that creates within-family correlation between \(MigAge_i\) and \(u_i\). For instance, parents might be more likely to migrate when the eldest child exhibits high academic ability or ambition—traits that may also correlate with more progressive values—while similar traits in younger siblings are less salient for the timing of the move. In that case, migration age would be systematically related to unobserved determinants of attitudes, violating (\ref{eq:assumption1}). This concern is attenuated if migration decisions respond to family-level opportunities or constraints, or to the well-being of the household as a whole, rather than to idiosyncratic traits of a particular child; our framework relies on this collective-decision view of migration timing.

Finally, older children could themselves influence the decision to migrate. An older sibling might advocate for migration to improve their own opportunities, or parents might weigh the older child’s preferences more heavily, again linking within-family \(MigAge_i\) to unobserved determinants of attitudes. If present, this channel would violate (\ref{eq:assumption1}). Our empirical strategy therefore rests on the premise that migration decisions are primarily parent-driven and that, within families, the residual determinants of gender attitudes do not systematically sort with which sibling happened to arrive at a relatively older age.

\subsection{Measuring Convergence in Attitudes Toward Gender Roles}\label{sec:convergenceMeasure}  
Our objective is to examine whether the age at which individuals migrate influences the alignment of their attitudes and beliefs about gender roles with those of the corresponding local population in the UK. Therefore, rather than emphasizing a single “average effect” under an arbitrary coding of the response scale, we focus on whether changes in the migration-age distribution among immigrants causally affect the overall similarity between immigrant and UK-born response distributions. \citep{chernozhukov2013inference}.

In what follows, we first motivate the use of Total Variation Distance as a measure of similarity between groups in our analysis. Next, we discuss how to measure the effect of changes in the distribution of migration age on the similarity between immigrants and locals, and finally, we present the empirical model that will guide our estimation.

\subsubsection{Total Variation Distance and The Worst-Case Average Difference  }\label{sec:totalVariationWCAD}

Researchers often seek to compare survey responses across groups. A widely used method for assessing similarity in responses to multiple-choice questions involves assigning numerical values to each possible answer and comparing the group averages. For example, in our setup, respondents choose from "Strongly Disagree," "Disagree," "Neither Agree nor Disagree," "Agree," or "Strongly Agree," which could be assigned values from 1 to 5. The difference in average scores between groups is then used as a measure of dissimilarity, with smaller differences indicating greater similarity.  

Although intuitive, this approach has a significant limitation in that assigning numerical values to responses is inherently arbitrary. Different choices of numerical values impose varying weights on the differences in response shares across groups. For instance, assigning the value 5 to "Strongly Agree" and 1 to "Strongly Disagree" means that differences in the share of respondents choosing "Strongly Agree" are weighted five times more heavily than differences in the share of respondents choosing "Strongly Disagree."  

To formalize this, let \( a^q_i \in \mathcal{A}^q \) represent the answer that individual \( i \) provides to question \( q \), where \( \mathcal{A}^q \) is the set of possible responses to question \( q \). Let \( h: \mathcal{A}^q \to \mathbb{R} \) be a function assigning a numerical value to each response. The expected numerical response for a group is then:  
\[
\mathbb{E}[h(a_i^q) \mid \text{Group}] = \sum_{a \in \mathcal{A}^q} h(a) \Pr(a^q_i = a \mid \text{Group}).
\]
The difference in expected responses between immigrants and locals is:  
\[
\mathbb{E}[h(a^q_i) \mid Immigrants] - \mathbb{E}[h(a^q_i) \mid Locals] 
= \sum_{a \in \mathcal{A}^q} h(a) \left[ \Pr(a^q_i = a \mid Immigrants) - \Pr(a^q_i = a \mid Locals) \right].
\]
This expression demonstrates that the difference in expected numerical responses is a weighted sum of the differences in response probabilities between the two groups, where \( h(a) \) explicitly determines the weights. Consequently, the choice of \( h(a) \) directly influences the measured difference across groups.  

Choosing \( h(a) \) depends on the specific objectives of a decision-maker, and different policymakers may prioritize varying aspects of the response distribution based on their policy goals. For instance, one policymaker might be particularly concerned about differences in the share of individuals who "Strongly Agree" with traditional gender roles, viewing strong beliefs as more resistant to change and more influential on behavior. Another policymaker might focus on the proportion of individuals who "Agree," interpreting moderate agreement as indicative of general acceptance and thus more amenable to policy interventions.

Given these diverse priorities and uncertainties about which differences are most critical, relying on a single, arbitrary scoring function \( h(a) \) may fail to capture the differences that matter most to different decision-makers. Our approach draws inspiration from the literature on robust decision-making and worst-case analysis \citep{hansen_sargent_2008, gilboa_schmeidler_1989, brooks2025simplicity}. In robust decision-making, policymakers consider the worst-case scenario to ensure that decisions are effective under the most adverse conditions. Similarly, we consider the \emph{worst-case average difference} between immigrants and locals across all possible scoring functions \( h(a) \) that map responses to values within a bounded interval. This approach ensures robustness by capturing the largest possible average difference across all bounded scoring functions, providing a  measure of group dissimilarity robust to different weighting. 

Formally, let \( p^q_{Immigrants} \) and \( p^q_{Locals} \) denote the distributions of answers among immigrants and locals, respectively. We then define the \emph{worst-case average difference (WCAD)} between immigrants and locals as the maximum discrepancy in average responses across all possible scoring functions \( h(a) \) that map responses to values within a bounded interval, such as \([0, 1]\):  
\begin{equation}\label{eq:worst_case_difference} 
\sup_{h: \mathcal{A}^q \to [0,1]} \left| \mathbb{E}[h(a^q_i) \mid Immigrants] - \mathbb{E}[h(a^q_i) \mid Locals] \right|.  
\end{equation}

This measure captures the maximum possible discrepancy in average responses between immigrants and locals across all valid scoring functions \( h(a) \), bounded between 0 and 1.\footnote{These bounds can easily be scaled} By focusing on the worst-case scenario, we ensure that the measure accounts for the greatest potential divergence in attitudes, regardless of the specific weighting scheme employed. If WCAD decreases, it implies that the worst-case discrepancy from any perspective has also decreased.

Although the response options are presented on an ordered scale, our distance metric deliberately does \emph{not} exploit this ordinal structure. In \eqref{eq:worst_case_difference} we allow $h$ to range over
all bounded recodings of the response categories, without imposing that $h$ respects the natural ordering
(or any particular spacing between adjacent categories). This is natural in our application: for attitudes
toward gender roles there is no clear consensus among policymakers about which response categories should be
weighted more heavily, so we remain fully agnostic about how decision-makers value each category. A direct
implication is that WCAD is invariant to relabelings of the response categories and therefore treats the categories as labels rather than as points on an ordered scale.\footnote{If one wishes to incorporate
the ordinal nature of Likert responses, one can restrict $h$ to be bounded and monotone with respect to the
response ordering. In that case, the supremum in \eqref{eq:worst_case_difference} is equivalent to the
Kolmogorov distance between the immigrants' and locals' response distributions (i.e., the
maximum difference in cumulative response shares across response values). See Appendix \ref{app:Ordinal} for additional discussion}

An appealing property of WCAD is its equivalence to the Total Variation distance between the response distributions of migrants and locals (e.g., \cite{levin2017markov}, Proposition 4.5).\footnote{The standard claim is that the total variation distance is equivalent to half the maximum difference in expectations over the set of bounded functions between \(-1\) and \(1\). However, we can easily extend this result to functions bounded between \(0\) and \(1\). Specifically, for any function \( g \) bounded between \(-1\) and \(1\), we can define a corresponding function \( f(x) = \frac{g(x)+1}{2} \), which is bounded between \(0\) and \(1\). It follows that:  
\[
\sup_{|g|\leq 1} \left| \mathbb{E}_p[g] - \mathbb{E}_q[g] \right| = 2 \cdot \sup_{0 \leq f \leq 1} \left| \mathbb{E}_p[f] - \mathbb{E}_q[f] \right|.
\]} The TV is straightforward to calculate and is defined as:
\[
TV(p^q_{Immigrants}, p^q_{Locals}) = \frac{1}{2} \sum_{a \in \mathcal{A}} \left| p^q_{Immigrants}(a) - p^q_{Locals}(a) \right|.
\]
In the discrete-support case, TV equals half the sum of absolute differences in the probabilities assigned to each response option. Consequently, to compute WCAD we do not need to optimize over all bounded weighting functions: we can compute it directly from the estimated response distributions via the expression above. The equivalence is thus useful both conceptually (WCAD summarizes the maximal discrepancy in expectations across bounded functions) and practically (it reduces the calculation to a simple function of the observed response shares).

This representation shows that WCAD is numerically identical to a familiar distance between distributions, which makes it easy to compute from the empirical response frequencies. It also allows us to interpret WCAD via a standard probabilistic characterization of TV: the coupling interpretation. Specifically, the TV distance quantifies the smallest proportion of immigrants whose responses would need to change to make their distribution identical to that of the local population. Formally, let \( a^q_{Immigrants} \) and \( a^q_{Locals} \) denote the random variables representing the responses of immigrants and locals to question \( q \). Then the TV distance can also be expressed using the coupling interpretation as:
\[
\text{TV}(p^q_{Immigrants}, p^q_{Locals}) = \inf_{\pi \in \Pi(p_{Immigrants}, p_{Locals})} \Pr_\pi(a^q_{Immigrants} \neq a^q_{Locals}),
\]
where \( \Pi(p_{Immigrants}, p_{Locals}) \) is the set of all joint distributions of \( p_{Immigrants} \) and \( p_{Locals} \). In this context, \( \Pr_\pi(a^q_{Immigrants} \neq a^q_{Locals}) \) represents the minimum probability that an immigrant’s response \( a^q_{Immigrants} \) differs from a local’s response \( a^q_{Locals} \) under the coupling \( \pi \). This infimum directly corresponds to the minimal fraction of immigrants that must alter their responses to eliminate any divergence between the two groups.

In sum, total variation distance has two policy-relevant interpretations: the worst-case difference in average responses, agnostic to any particular value function of the decision maker, and the minimal fraction of responses that must change to achieve alignment. We now define causal parameters describing how shifts in the migration-age distribution affect this distance.


\subsubsection{Measuring the Causal Effect of Migration Age on the Total Variation Distance Between Immigrants and Locals}  
In our empirical analysis, we aim to examine how changing the migration age for the entire population of immigrants would affect the TV distance between immigrants and the local UK population. To measure this, we propose two measures. The first measure evaluates how setting the migration age to zero for all immigrants in our sample would impact the TV distance between locals and immigrants. The second measure considers how a marginal shift in the migration age of the immigrant population would influence the TV and WCAD.  

We begin by analyzing the effect of assuming all immigrants in our sample were born in the UK. Let \( p^{CF,0}_{Immigrants} \) denote the counterfactual choice distribution in a scenario where all immigrant children were not abroad, but in the UK, i.e., assuming we "fix" (\cite{heckman2015causal}) \( MigAge = 0 \). Denote by $a_{i,MigAge=0}^q$ the potential outcome response of individual $i$ to question $q$, if they were born in the UK, then  
\[
p^{CF,0}_{Immigrants}  = 
\int 
P(a^q_{i,MigAge=0} \mid MigAge = x, f(i), \text{Immigrants}) 
\, dF(f(i), MigAge = x \mid \text{Immigrants}),
\]
Then, our first measure of the effect of migration age on attitude convergence is given by:
\[
\Delta TV^0_q = TV(p^q_{Locals}, p^q_{Immigrants}) - TV(p^q_{Locals}, p^{CF,0}_{Immigrants}).
\]
This measures how the TV distance would change in a counterfactual world in which we shift the migration dates of current immigrants (e.g., setting migration age to zero). It can also be interpreted via WCAD as the induced change in an upper bound on the policymaker’s loss—when the policymaker’s value function is bounded—arising from differences between the two populations’ distributions. A positive value of $\Delta TV_q$ implies that this worst-case average difference decreases when migration age is set to zero, suggesting that if all immigrants in the sample were assumed to be born in the UK, their attitudes would become more similar to those of the UK population. Conversely, a negative value indicates that attitudes would diverge further under this assumption.

This measure is a ``global" measure, as it considers a large change in migration age values. It corresponds to a thought experiment that asks: how would the TV or WCAD would change, if instead of the realized migration ages, all immigrant families had children only after migrating to the UK? While this measure is insightful as a conceptual exercise, its interpretation in a policy context is more nuanced. Specifically, it corresponds to a hypothetical policy that restricts migration to families without children, requiring families to have children only after settling in the UK. However, taking this measure at face value assumes that the distribution of families who choose to immigrate to the UK would remain unchanged under such an extreme policy. This assumption is unlikely to hold, as migration decisions would plausibly adjust in response to this policy. Therefore, alongside this global measure, we also explore a more incremental approach by examining the \textit{marginal effect} of migration age on the TV Distance.

The second measure, the Marginal Total Variation Divergence (MTVD), captures the marginal effect of migration age on the divergence between the answer distributions of immigrants and locals. This measure answers the counterfactual question: what would happen to the TV distance between these distributions if the migration age for all immigrants were slightly increased? The MTVD reflects how a small perturbation in migration age, around the current distribution of answers, affects the similarity between the two groups. It assesses whether this marginal increase would lead to greater or lesser divergence in attitudes.

To define the MTVD formally, consider the perturbed distribution:
\[
p_\epsilon(a^q_i \mid Immigrants) = 
\int 
P(a^q_{i,x+\epsilon} \mid MigAge = x, f(i), Immigrants) 
\, dF(f(i), MigAge = x \mid Immigrants),
\]
where \( p_\epsilon(a^q_i \mid Immigrants) \) represents the counterfactual probability that an immigrant chooses response \( a^q_i \) to question \( q \) if the migration age of the entire population increases by \( \epsilon \). As discussed in Section \ref{sec:empircalStragey}, this represents a causal change for the entire population. To measure the MTVD, we calculate:
\[
MTVD = \frac{\partial TV(p_{\epsilon}, p_{locals})}{\partial \epsilon} \bigg|_{\epsilon=0}.
\]
This derivative quantifies how a small increase in migration age affects the similarity between the answer distributions of immigrants and locals.

The MTVD corresponds to a wide range of policies aimed at reducing the time-to-migrate. For example, governments could expedite decision times or encourage families to migrate when their children are young. Alternatively, policies that increase the likelihood of granting asylum to families with younger children could incentivize earlier migration. By measuring the MTVD, we gain insight into how incremental changes in migration timing might influence cultural convergence and alignment with local attitudes. Crucially, from a policy perspective, we assume that marginal changes in children’s age at migration do not alter the composition of families who choose to migrate.

\subsubsection{Empirical Model}   
To operationalize these two measures, we assume a linear probability model. Specifically, we estimate the parameters of the following linear probability model:  
\begin{equation}\label{eq:linearProbModel}
\pr(a^q_i \mid \text{MigAge}, f(i)) = \text{MigAge}_i \beta_{q,a} + \gamma^{q,a}_{f(i)},
\end{equation}  
where \( \beta_{q,a} \) captures the response of the share of immigrants who answered \( a^q_i \) to question \( q \) to changes in migration age, and \( \gamma^{q,a}_{f(i)} \) represents family fixed effects.  

This linear probability model allows us to extend the empirical strategy discussed in Section \ref{sec:empircalStragey} to control for family fixed effects while identifying the causal effect of migration age on mean outcomes. This enables us to explore how migration age affects the similarity between immigrants and locals\footnote{Identifying counterfactuals in nonlinear models, with fixed effects, is notoriously challenging and in most cases is not identified. See, for example, \cite{HonoreKesina2017} and \cite{Chamberlain2010}.}.  

Using this model, we can estimate our first measure as:  
\[
\Delta TV^0_q  = \underbrace{\frac{1}{2}\sum_{a \in \mathcal{A}^q} \left| P(a_i^q|Immigrants) - P(a_i^q \mid \text{Locals}) \right|}_{\text{Observed TV}} -   \underbrace{\frac{1}{2} \sum_{a \in \mathcal{A}^q} \left| \int_{i \in Immigrants} \gamma_i^{q,a} \, di - P(a_i^q \mid \text{Locals}) \right|}_{\text{CF TV}},
\]
where the integral is taken over the population of immigrants.  Similarly, to compute the MTVD, we combine Equation \ref{eq:linearProbModel} with the definition of the Total Variation Distance to obtain:  
\be
MTVD = \frac{1}{2} \sum_{a^q \in \mathcal{A}^q} \bigg[&  
    \mathbbm{1}\big[P(a^q \mid Immigrants) \geq P(a^q \mid Locals)\big] \beta_{q,a}  \\ 
    - & \mathbbm{1}\big[P(a^q \mid Immigrants) < P(a^q \mid Locals)\big] \beta_{q,a} 
\bigg],
\ee
where \( \mathbbm{1}[\cdot] \) is an indicator function.  

The MTVD is straightforward to estimate from the data and summarizes how migration age changes the distance between immigrants’ and locals’ response distributions. For each response option $a^q$, the term compares the baseline shares $P(a^q\mid Immigrants)$ and $P(a^q\mid Locals)$, and then weights the migration-age effect $\beta_{q,a}$ with a sign chosen so that MTVD increases when migration age amplifies immigrant–local differences and decreases when it shrinks them. Concretely, if immigrants are initially at least as likely as locals to choose $a^q$ $\big(P(a^q\mid Immigrants)\ge P(a^q\mid Locals)\big)$, then $\beta_{q,a}>0$ means migration age widens the gap for that response, while $\beta_{q,a}<0$ means it narrows the gap. If immigrants are initially less likely than locals to choose $a^q$, the sign flips: $\beta_{q,a}>0$ reduces the immigrant–local gap (immigrants “catch up”), whereas $\beta_{q,a}<0$ increases it.

We estimate both these measures using their empirical counterparts. 



\section{Results}\label{sec:results}
In this section, we present the main results, followed by robustness checks and an exploration of heterogeneity across countries of origin.

Table \ref{tab:main_results_table} summarizes the effect of migration age on agreement with four statements about gender role attitudes. The first row reports the effect of migration age on the average score for each statement, where responses are numerically coded as: 1 (Strongly Agree), 2 (Agree), 3 (Neither Agree nor Disagree), 4 (Disagree), and 5 (Strongly Disagree). Lower average scores indicate greater agreement with the statement. The results show that an increase in migration age is associated with stronger agreement with the first statement ("A pre-school child is likely to suffer if his or her mother works") and the second statement ("All in all, family life suffers when the woman has a full-time job"). The strongest effect of migration age, however, is observed for the fourth statement ("A husband's job is to earn money, a wife's job is to look after the home and family"), which most explicitly reflects traditional gender attitudes. A comparison of raw means between immigrants and locals indicates that higher migration ages are associated with attitudes that diverge further from those of locals for Questions 1, 2, and 4. This pattern suggests that earlier migration is linked to greater alignment with local views on traditional gender roles. In contrast, for Question 3 (“Husband and Wife Should Contribute to HH Income”), immigrants’ and locals’ mean scores are already very similar, and the estimated migration-age effect on the average score is small and statistically indistinguishable from zero. Accordingly, for this item there is essentially no meaningful divergence as migration age increases.

The second row of Table \ref{tab:main_results_table} examines the effect of migration age using an alternative scoring function, defining agreement as responses of "Strongly Agree," "Agree," or "Neither Agree nor Disagree." This analysis indicates that a one-year increase in migration age raises the likelihood of agreement with the statements by approximately 1-2 percentage points for three of the four statements. Overall, the results consistently show that higher migration ages are associated with attitudes further removed from those of the local population, reinforcing the importance of early migration for cultural assimilation.

\begin{table}[!h]
    \centering
    \resizebox{\textwidth}{!}{
            \footnotesize
    \begin{tabular}{lcccc}
\toprule \toprule
\textbf{} & \textbf{(1)} & \textbf{(2)} & \textbf{(3)} & \textbf{(4)} \\
\hline\textbf{} & \makecell{\textbf{Pre-School Suffers} \\ \textbf{if Mother Works}} & \makecell{\textbf{Family Suffers} \\ \textbf{if Mother Works} \\ \textbf{Full Time}} & \makecell{\textbf{Husband and Wife} \\ \textbf{Should Contribute} \\ \textbf{to HH Income}} & \makecell{\textbf{Husband Should Earn,} \\ \textbf{Wife Stay at Home}} \\
\hline
Average Score  & -.023 & -.025 & .014 & -.041 \\
\quad  &  (.014) &  (.012) &  (.012) &  (.014) \\
\quad Migrants Average & 2.988 & 3.268 & 2.070 & 3.558 \\
\quad  & [1.062] & [1.141] & [0.923] & [1.188] \\
\quad Locals Average & 3.334 & 3.620 & 2.132 & 3.923 \\
\quad  & [1.006] & [1.049] & [0.899] & [1.036] \\
Binarized & .012 & .011 & -.005 & .017 \\
\quad &  (.006) &  (.006) &  (.006) &  (.005) \\
\quad Migrants Average & 0.682 & 0.569 & 0.941 & 0.468 \\
\quad  & [0.466] & [0.495] & [0.236] & [0.499] \\
\quad Locals Average & 0.560 & 0.433 & 0.943 & 0.323 \\
\quad  & [0.496] & [0.496] & [0.232] & [0.468] \\
\midrule
Agree & .007 & .007 & -.006 & .01 \\
\quad  &  (.006) &  (.005) &  (.003) &  (.006) \\
Neither Agree/Disagree & .005 & .005 & 0 & .007 \\
\quad &  (.007) &  (.006) &  (.005) &  (.007) \\
Disagree & -.012 & -.011 & .005 & -.017 \\
\quad &  (.006) &  (.005) &  (.003) &  (.006) \\
\bottomrule
\end{tabular}

    }
    \caption{Mean Results}\label{tab:main_results_table}
        \begin{minipage}{\linewidth}
    \footnotesize
    \justifying
    \textit{Note:} Regression-style summary results by question (columns (1)--(4), with the same four statements as Table 1). The rows labeled Average Score, Binarized, and the category-specific rows report coefficient estimates, with standard errors clustered at the family level in parentheses. For the rows labeled Immigrants Average and Locals Average, the table reports group means, with corresponding standard deviations in square brackets. Responses are coded as 1 (Strongly Agree), 2 (Agree), 3 (Neither Agree nor Disagree), 4 (Disagree), and 5 (Strongly Disagree). For the category-specific rows, we collapse the five-point scale into three mutually exclusive categories: \emph{Agree} equals $\mathbbm{1}\{\text{Strongly Agree or Agree}\}$, \emph{Neither} equals $\mathbbm{1}\{\text{Neither Agree nor Disagree}\}$, and \emph{Disagree} equals $\mathbbm{1}\{\text{Disagree or Strongly Disagree}\}$.
    \end{minipage}
\end{table}

Table \ref{table:robustness} in the Appendix evaluates the robustness of the results in table \ref{tab:main_results_table}. The first row, presenting the average linear mean and agreement outcomes, replicates the findings from Table \ref{tab:main_results_table}. As discussed in Section \ref{sec:empircalStragey}, a key threat to our identification strategy is the potential correlation between unobserved characteristics that affect gender role attitudes and migration age within families. For example, older siblings may be more likely to hold specific worldviews, or age at the time of the interview may independently influence attitudes toward gender roles. Additionally, the sex of respondents may correlate with birth order, potentially introducing bias. 

To address these concerns, row 2 of Table \ref{table:robustness} includes indicators for whether the respondent is the oldest child in their family and for the respondent’s sex. The results remain consistent with those in Table \ref{tab:main_results_table}, with no meaningful change in the size or direction of the estimated effects. We then add a linear control for the respondent’s age. The estimated effects of migration age on agreement with the second and fourth statements remain robust, but the effect on the first statement (“A pre-school child is likely to suffer if his or her mother works”) becomes statistically insignificant and moves toward zero, suggesting that age-related differences may partly account for the baseline association for question 1. Finally, we replace the linear age control with non-parametric age controls to allow for flexible age-specific effects. The results closely mirror those obtained with the linear specification, reinforcing the overall robustness of our findings.\footnote{With family fixed effects, all coefficients are identified from \emph{within-family} contrasts. When siblings share a single migration episode (common arrival year $A_f$), age at arrival satisfies $MigAge_{if}=A_f-B_{if}$ and is therefore mechanically linked to siblings’ birth years. The “oldest” control captures a non-linear firstborn level difference (oldest vs.\ all others). In two-sibling families that migrate together, the within-family component of $MigAge_{if}$ is proportional to the demeaned “oldest” dummy, so the
two effects cannot be separated within such families; separate identification of the oldest effect comes from families with three or more siblings and/or from families with non-synchronous migration, and in two-sibling families from cross-family variation in siblings’ age gaps. Likewise, controlling linearly for age at interview is redundant when siblings are interviewed in the same wave/year (since within family $\widetilde{Age}_{if}=
\widetilde{MigAge}_{if}$); when interview years differ across siblings, separating age from $MigAge$ uses the resulting within-family variation in interview timing (equivalently, years since arrival).}

As discussed in Section \ref{sec:totalVariationWCAD}, the choice of scoring function may influence conclusions. To address this, we explore the impact of migration age on the Total Variation distance, as discussed in section \ref{sec:convergenceMeasure}. We first start by checking whether the linear probability model yields sensible results. These checks are not intended as empirical tests of the behavioural model; rather, they verify that the estimated system respects the adding-up identities implied by estimating category-specific dummy outcomes that sum to one. In particular, we consider two internal consistency checks. First, for each question $q$, changes in response probabilities across answer categories $a$ must sum to zero, since probabilities must continue to add up to one. That is, we expect
\[\sum_{a}\beta_{q,a}=0 \quad \text{for each } q \].
Second, the baseline family fixed effects should define a valid probability distribution over responses for each question, so we expect
\[
\sum_{a} \gamma^{q,a}_{f(i)}=1 \quad \text{for each } (i,q).
\]
Figure \ref{fig:sumOfFixedEffect} reports the bootstrap distribution of $\sum_a \gamma^{q,a}_{f(i)}$ (aggregated across $i$ and $q$) from Model \ref{eq:linearProbModel}. The distribution is tightly concentrated around $1$, as expected, confirming that the implied counterfactual distribution at $MigAge=0$ is a proper probability distribution. Similarly, Figure \ref{fig:sumOfBetas} shows the bootstrap distribution of $\sum_a \beta_{q,a}$, which is tightly centered at zero. This indicates that increases in some response probabilities are offset by decreases in others, so the resulting distributions remain valid as migration age varies.

\begin{figure}[!h]
    \centering    \includegraphics[width=\textwidth,height=0.8\textheight,keepaspectratio]{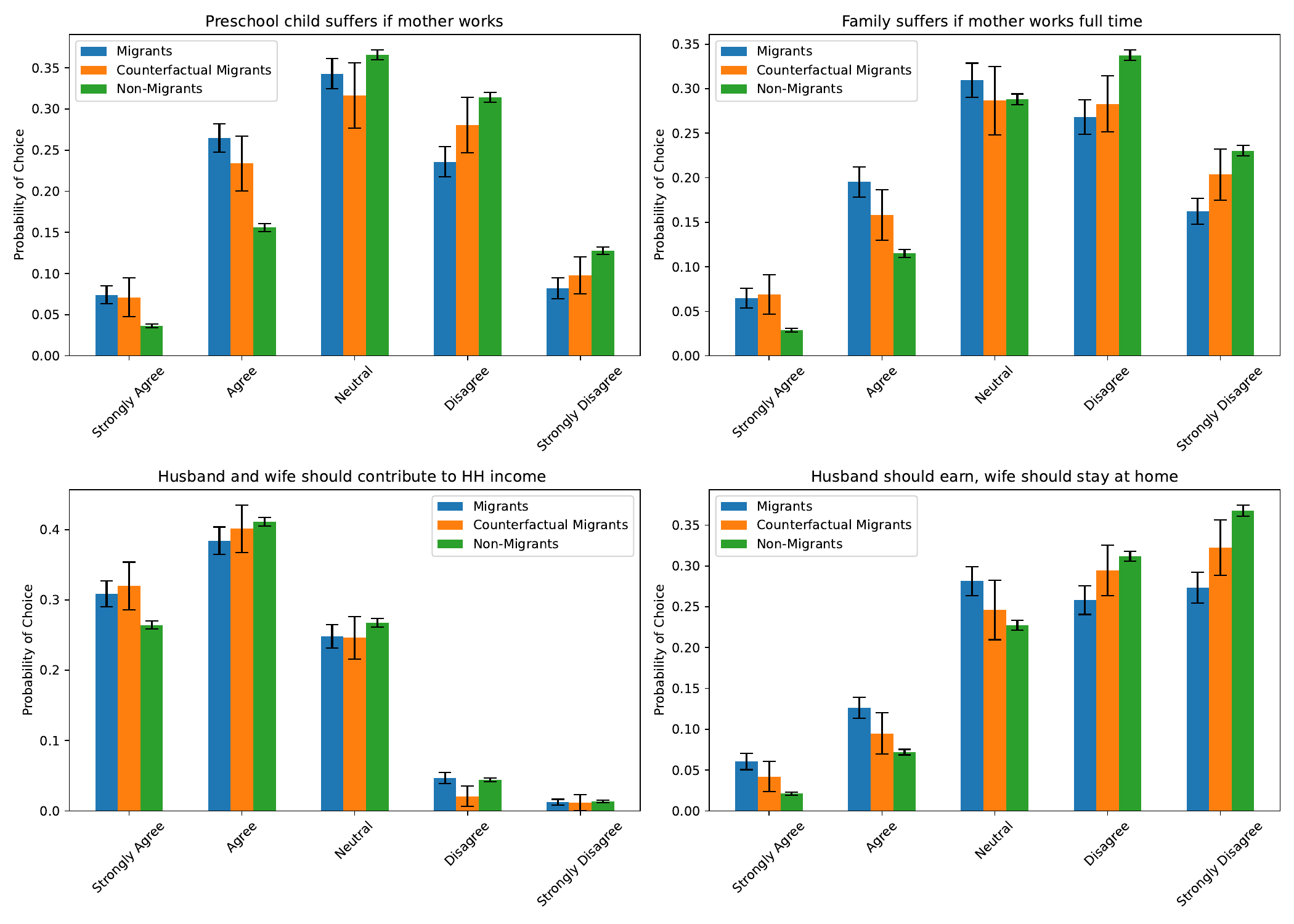}
    \caption{Answer Distributions Across Immigrants and Locals}
    \label{fig:answerDistribution}
    \begin{minipage}{\linewidth}
    \footnotesize
    \justifying
    \textit{Note:} Four-panel bar chart of response distributions across the five response categories (Strongly Agree, Agree, Neutral, Disagree, Strongly Disagree) for each attitude statement. The y-axis reports the “Probability of Choice.” Within each response category, bars compare (i) immigrants (immigrant siblings sample), (ii) “Counterfactual Migrants” constructed by fixing \(MigAge=0\) (arrival at birth) within the estimation framework, and (iii) non-migrants/locals. Error bars denote bootstrap 95\% confidence intervals based on 1,000 replications.
    \end{minipage}
\end{figure}

We now turn to describe our main results. We begin with our “global” measure, which asks how the TV distance would change under a counterfactual in which all immigrant children were effectively born into the same families, but in the UK rather than abroad—that is, we “fix” $MigAge = 0$. The orange bars in Figure \ref{fig:answerDistribution} show the response distribution in this counterfactual scenario. Under this counterfactual, immigrants’ choice probabilities move closer to those of the local UK population than in the observed data, suggesting that spending a larger share of one’s formative years in the UK leads to gender-role attitudes that more closely resemble those of locals.

Table \ref{tab:TotalVariationMeasure} expands on the graphical results and summarizes cultural distance between immigrants and locals using the TV distance. As shown in Section \ref{sec:convergenceMeasure}, TV is numerically equal to the Worst-Case Average Difference all bounded scoring functions \(h:A^q\to[0,1]\). Put differently, a TV distance of \(0.15\) means that—under the scoring rule that makes the two groups look as different as possible—their mean responses can differ by as much as 0.15. TV also has a transparent “mass-moving” interpretation: it equals the minimum share of immigrant responses that would need to be reallocated across categories to exactly match the locals’ response distribution.

The first row shows that the observed immigrant–local distance ranges from \(0.047\) to \(0.148\) across the four statements. In “share-of-responses” terms, this implies that aligning immigrants’ distribution with locals’ would require reassigning about \(4.7\%\) of responses for the income-contribution statement (Question 3), but roughly \(14\%{-}15\%\) for the two work–family statements (Questions 1–2) and for the most explicitly traditional gender-role statement (Question 4). This pattern matches the mean-based results: Question 3 already exhibits near-alignment in the observed data, while the largest gaps appear for statements directly invoking maternal employment and household specialization.

Rows 2–3 compare this observed distance to a counterfactual in which all immigrant children are assigned \(MigAge=0\) (the “born in the UK” benchmark). For Questions 1, 2, and 4, setting migration age to zero moves immigrants closer to locals: TV falls from \(0.146\) to \(0.112\) for Question 1 (a reduction of \(0.034\)), from \(0.137\) to \(0.083\) for Question 2 (a reduction of \(0.054\)), and most sharply from \(0.148\) to \(0.063\) for Question 4 (a reduction of \(0.085\), i.e., about a 57\% decline relative to the baseline gap). Interpreted through the coupling lens, the Question 4 counterfactual implies that the minimum fraction of immigrant responses that must “switch categories” to match locals drops from about \(14.8\%\) to \(6.3\%\). By contrast, for Question 3 the counterfactual change is small and slightly negative (\(-0.008\)), consistent with the idea that there is little room for convergence on this dimension because immigrants and locals already look similar in the observed distribution.

Finally, the last row reports the Marginal Total Variation Divergence, which captures how a small uniform increase in migration age would change cultural distance. Across questions, MTVD is positive, indicating that delaying migration tends to widen the immigrant–local gap: a one-year uniform increase in \(MigAge\) raises TV by about \(0.014\) for Question 1, \(0.022\) for Question 2, and \(0.034\) for Question 4. In “share” terms, these correspond to roughly $1.4$, $2.2$, and $3.4$ additional percentage points of responses that would need to be reallocated to restore parity—again with the largest marginal effect for the statement most directly about traditional household roles. 

In Appendix \ref{app:Ordinal}, we examine a similar analysis, but only restricting the set of ranking function to be those who respect the ordinal structure of the response options (i.e., “strongly agree” must be ranked above “agree,” and so forth). The results we get are similar, suggesting that our upper bounds results we get from the TV measure are not driven by unreasonable orderings of the response categories.

\begin{table}[H]
    \centering
    \resizebox{\textwidth}{!}{
    \footnotesize
    
\begin{tabular}{@{}l c c c c@{}}
\toprule
\textbf{} & \textbf{(1)} & \textbf{(2)} & \textbf{(3)} & \textbf{(4)} \\
\midrule
  & \textbf{\textbf{Preschool child}} & \textbf{\textbf{Family suffers if}} & \textbf{\textbf{Husband and wife}} & \textbf{\textbf{Husband should earn,}} \\
  & \textbf{\textbf{suffers if mother}} & \textbf{\textbf{mother works}} & \textbf{\textbf{should contribute}} & \textbf{\textbf{wife should}} \\
  & \textbf{\textbf{works}} & \textbf{\textbf{full time}} & \textbf{\textbf{to HH income}} & \textbf{\textbf{stay at home}} \\
\midrule
Total Variation - Immigrants Vs. Locals & 0.146 & 0.137 & 0.047 & 0.148 \\
 & [0.111,0.181] & [0.105,0.175] & [0.026,0.081] & [0.112,0.184] \\
Total Variation - Immigrants CF Vs. Locals & 0.112 & 0.083 & 0.056 & 0.063 \\
 & [0.062,0.185] & [0.051,0.162] & [0.03,0.122] & [0.037,0.135] \\
Difference & 0.034 & 0.054 & -0.008 & 0.085 \\
 & [-0.028,0.078] & [-0.013,0.083] & [-0.07,0.023] & [0.018,0.116] \\
\midrule
MTVD & 0.014 & 0.022 & 0.006 & 0.034 \\
 & [-0.008,0.04] & [-0.001,0.044] & [-0.021,0.024] & [0.012,0.056] \\
\bottomrule
\end{tabular}

    }
    \caption{Main Results}
    \label{tab:TotalVariationMeasure}
    \begin{minipage}{\linewidth}
    \footnotesize
    \justifying
    \textit{Note:} Total-variation-based distributional comparisons by question (columns (1)–(4), corresponding to the four statements). The first row reports the total variation distance between the response distributions of immigrants and locals. The second row reports the total variation distance between the counterfactual immigrant response distribution—if all immigrants had immigrated at age 0—and the local response distribution. The “Difference” row reports the difference between the first two rows. Finally, the last row reports the MTVD. Each cell reports a point estimate, followed by bootstrap 95\% confidence intervals, in square brackets, based on 1,000 replications.
    \end{minipage}
\end{table}

\begin{table}[!h]
    \centering
    \resizebox{\textwidth}{!}{
    \footnotesize
    
\begin{tabular}{@{}l c c c c@{}}
\toprule
  & Preschool child & Family suffers if & Husband and wife & Husband should earn, \\
  &  suffers if mother &  mother works &  should contribute &  wife should \\
  &  works &  full time  &  to HH income  &  stay at home \\\\
\midrule
Total Variation - Western Europe, USA and Canada Vs. Locals & 0.08 & 0.14 & 0.08 & 0.095 \\
 & [0.035, 0.18] & [0.084, 0.224] & [0.034, 0.171] & [0.046, 0.189] \\
 Total Variation - CF Western Europe, USA and Canada Vs. Locals & 0.096 & 0.128 & 0.137 & 0.107 \\
 & [0.05, 0.242] & [0.07, 0.274] & [0.066, 0.309] & [0.061, 0.276] \\
Difference & -0.016 & 0.013 & -0.057 & -0.012 \\
 & [-0.118, 0.046] & [-0.096, 0.073] & [-0.197, 0.038] & [-0.16, 0.043] \\
\midrule
Total Variation - Rest of the world Vs. Locals & 0.186 & 0.198 & 0.04 & 0.201 \\
 & [0.147, 0.231] & [0.15, 0.249] & [0.016, 0.088] & [0.151, 0.25] \\
 Total Variation - CF Rest of the world Vs. Locals & 0.15 & 0.127 & 0.042 & 0.1 \\
 & [0.078, 0.239] & [0.077, 0.211] & [0.026, 0.134] & [0.045, 0.178] \\
Difference & 0.036 & 0.072 & -0.002 & 0.101 \\
 & [-0.042, 0.103] & [0.004, 0.116] & [-0.082, 0.024] & [0.031, 0.146] \\
\midrule
MTVD Western Europe, USA and Canada Vs. Locals & -0.008 & 0.011 & -0.01 & 0.002 \\
 & [-0.056, 0.05] & [-0.041, 0.076] & [-0.088, 0.063] & [-0.057, 0.071] \\
\midrule
MTVD Rest of the world Vs. Locals & 0.014 & 0.027 & 0.006 & 0.039 \\
 & [-0.014, 0.047] & [0.006, 0.048] & [-0.023, 0.031] & [0.014, 0.064] \\
\midrule
\bottomrule
\end{tabular}

    }
    \caption{Heterogeneity across Country of Origin}
    \label{tab:TV_hetro_WesternVs}
        \begin{minipage}{\linewidth}
    \footnotesize
    \justifying
    \textit{Note:} This table is analogous to Table \ref{tab:TotalVariationMeasure}. In Section 1, we compare the total variation distance between the response distributions of immigrants from Western Europe, the United States, and Canada and those of locals. The next row reports the corresponding counterfactual for these immigrants under the scenario in which they were born in the UK. The final row in Section 1 reports the difference between the observed and counterfactual distances. Section 2 presents the same set of quantities for immigrants from the rest of the world. The last section reports the MTVD for immigrants from Western Europe, the United States, and Canada and, separately, for immigrants from the rest of the world. Each cell reports a point estimate, with a 95\% confidence interval in square brackets constructed from 1{,}000 bootstrap replications.
    \end{minipage}
\end{table}




Finally, we examine whether the convergence patterns vary systematically with origin-country context. Table \ref{tab:TV_hetro_WesternVs} splits immigrant siblings into those from Western Europe / the US / Canada versus the rest of the world, and reports the same TV distance as in the main results. The contrast between the two groups is sharp. For immigrants from Western Europe / the US / Canada, the immigrant–local distance is already modest—TV ranges from about 0.08 to 0.14 across questions. Moreover, shifting this group to the \(MigAge=0\) counterfactual produces changes that are economically small and statistically indistinguishable from zero (e.g., differences of \(-0.016\), \(0.013\), \(-0.057\), and \(-0.012\)). Consistent with this, the MTVD estimates for the Western group are essentially zero (e.g., \(-0.008\), \(0.011\), \(-0.010\), \(0.002\)), indicating little marginal sensitivity of cultural distance to migration timing within this origin set.

By contrast, immigrants from the rest of the world exhibit substantially larger gaps relative to locals for the statements most explicitly tied to gender roles. In the observed data, TV is about 0.186, 0.198, and 0.201 for Questions 1, 2, and 4, respectively—i.e., roughly 19–20\% of responses would need to be reassigned across categories to match locals’ distributions. Fixing \(MigAge=0\) reduces these distances, with especially large improvements for the “family suffers if mother works full time” statement (a reduction of $0.072$) and the “husband should earn, wife should stay at home” statement (a reduction of $0.101$). The marginal results tell the same story: MTVD is positive for this group, and is statistically distinguishable from zero for Questions 2 and 4 (e.g., $0.027$) for Question 2, and $0.039$ for Question 4). Interpreted in “share-moving” terms, a one-year uniform delay in migration increases the minimum mass that would need to be reassigned by about 2.7 pp (Q2) and 3.9 pp (Q4) for non-Western-origin immigrants.

Appendix Tables \ref{tab:hetro} and \ref{tab:TV_hetro_differentUK} refines this heterogeneity by measuring cultural proximity to the UK using the World Values Survey: for each origin country, we construct a gender-attitude profile from several WVS gender-related items and split countries at the median distance from the UK. Immigrants from “similar-to-UK” origins have relatively small baseline TV gaps ($0.114$, $0.059$, $0.06$, $0.042$ across the four questions). In contrast, immigrants from “different-from-UK” origins exhibit much larger baseline gaps for Questions 1, 2, 3 and especially 4 ( $0.231$, $0.293$, $0.059$ and $0.309$), implying that roughly $23–27\%$ of responses would need to be reallocated to match locals’ distributions on those dimensions.

Crucially, it is this culturally distant group for which migration timing meaningfully predicts convergence. For the most explicit gender-role statement (Question 4), fixing \(MigAge=0\) reduces TV from $0.309$ to $0.207$, a decline of $0.101$. The MTVD for the same question is $0.05$, implying that each additional year of migration age increases the “minimum mass to move” by roughly $5$ percentage points for immigrants from culturally distant origins. By comparison, the culturally similar group shows no statistically detectable MTVD effects, estimates are close to zero and statistically indistinguishable from zero across all questions.

Taken together, Tables \ref{tab:TV_hetro_WesternVs} and \ref{tab:TV_hetro_differentUK} point to a consistent mechanism: migration age matters most when the origin-country norm environment is far from the UK’s. When baseline cultural distance is small, there is limited scope for convergence and little detectable sensitivity to migration timing. When baseline distance is large, earlier arrival substantially compresses the attitude gap—especially for statements directly about traditional household specialization—supporting an assimilation/exposure channel in which formative-time exposure to host-country norms plays a central role.

\section{Conclusions}\label{sec:conclusions}
This paper studies cultural assimilation through the lens of gender-role attitudes, asking whether \emph{age at arrival} causally shapes how closely immigrants’ beliefs align with those of the UK-born population. Motivated by the idea that childhood is a formative period for socialization, we combine a sibling fixed-effects design with a distributional perspective on convergence in survey responses. The sibling design isolates within-family variation in age at arrival—holding fixed shared parental background, origin, and the migration episode—so that differences across siblings primarily capture differences in exposure to UK norms during childhood.

Our results point to a simple takeaway: shifting the arrival-age distribution earlier—so immigrant children spend a larger share of childhood in the UK—would make their gender-role attitudes more similar to those of UK-born locals. This pattern is consistent with a socialization interpretation of assimilation: exposure during formative years appears to shift attitudes toward the host-country norm, while arriving later—after attitudes have had more time to form in the origin environment—leaves immigrants’ beliefs further from the local baseline. In this sense, age at arrival operates as a proxy for the timing and intensity of cultural exposure, and our estimates suggest that the “window” in which that exposure matters is concentrated in childhood.
We also find meaningful heterogeneity across origin contexts that further supports this interpretation. Immigrants from culturally closer origins—such as Western Europe, the US, or Canada, and more generally origins with gender-role norms closer to the UK as measured in the World Values Survey—exhibit smaller baseline differences from locals and weaker sensitivity to age at arrival. By contrast, immigrants from culturally more distant origins display larger baseline gaps and substantially larger age-at-arrival effects: when the origin-country norm is further from the UK baseline, arriving earlier matters more for convergence. This pattern is difficult to reconcile with purely mechanical explanations and is instead consistent with the idea that the key margin is the intensity and timing of exposure to the host-country cultural environment.

Our main outcome of interest is respondents’ answers to questions about gender norms, rather than directly observed actions. While economists often prefer revealed behavior over stated preferences, we argue that in the domain of cultural conflict and social integration, stated views are themselves a first-order object of interest. The attitudes immigrants and locals express—in surveys, classrooms, workplaces, and public discourse—are precisely the differences that can be noticed by others, become socially salient, and potentially generate stigma, disdain, or interpersonal tension, even when day-to-day behavior inside the home is harder to observe. That said, in our data, stated attitudes are meaningfully correlated with household behavior, including measures of within-household specialization and related patterns of time use and task allocation.

Beyond the substantive findings, the paper contributes a measurement approach for comparing attitudes in ordinal surveys. Because Likert responses have no canonical numerical scale, mean differences implicitly embed arbitrary choices about how far apart categories are and which parts of the distribution matter. Total Variation sidesteps this problem by treating responses as distributions: it equals the largest possible difference in average responses across all bounded recodings of the categories, so evidence of convergence is robust to how one “scores” the answers. At the same time, TV has a concrete interpretation as the minimal fraction of responses that would need to change for immigrants’ distribution to match locals’. This combination—robustness to scoring and a transparent “share who would need to switch” interpretation—makes TV a useful reporting device for cultural convergence and, more broadly, for any setting where researchers want to summarize treatment effects on survey distributions rather than only on means.

\newpage
\bibliographystyle{apalike} 
\bibliography{sample}

\newpage
\appendix
\section{Measuring Gender Attitude Similarities between Countries and the UK}\label{appendix:similarity}

To examine heterogeneity in responses by country of origin, we complement the UKHLS data with Wave 7 of the World Values Survey (WVS), conducted between 2017-2020. We selected this wave for its comprehensive country coverage and question set. The WVS questionnaire presents statements that respondents rate on a four-point scale. To calculate country-level gender attitude scores, average responses to the following statements:
\begin{itemize}
    \item When a mother works for pay, the children suffer
    \item On the whole, men make better political leaders than women do
    \item A university education is more important for a boy than for a girl
    \item On the whole, men make better business executives than women do
    \item Being a housewife is just as fulfilling as working for pay
    \item When jobs are scarce, men should have more right to a job than women
    \item If a woman earns more money than her husband, it's almost certain to cause problems
\end{itemize}

To measure the similarity between the gender attitudes of an origin country and those of the UK, we compute the distance between each country's seven-component vector of average responses and the corresponding UK vector. Specifically, let $\bar{a}_q^c$ represent the average response to statement $q$ by respondents from country $c$. We then calculate the similarity in gender attitudes between country $c$ and the UK using the formula:

\[
\text{Diff}_{c,UK}  = \sqrt{\sum_{q} (\bar{a}_{q}^c - \bar{a}_{q}^{UK})^2}.
\]

This metric intuitively captures how closely the attitudes of each country align with those of the UK. In our analysis in Section \ref{sec:results}, we split the sample based on the median distance of each country of origin in our dataset from the UK.

\section{Ordinal Robust Measure of Convergence}\label{app:Ordinal}

Our main analysis measures convergence in gender-role attitudes using the total variation (TV) distance between the response distributions of immigrants and locals. TV is attractive because it is numerically equal to the worst-case average difference over all bounded scoring rules. At the same time, TV treats response categories as an unordered finite set. For order-type items such as ``Strongly Agree'' through ``Strongly Disagree,'' it is also natural to consider a robustness concept that respects order. 

This appendix estimate such a measure. We restrict attention to bounded scoring rules that are \emph{monotone} in the ordered response categories. The resulting worst-case average difference coincides with the Kolmogorov distance between the two cumulative response distributions. This yields an ordinally meaningful analogue to the TV-based analysis in the main text.

\subsection{Setup and Definition}

Let the ordinal response to question $q$ take values in the ordered support
\[
A^q \in \{1,2,\dots,K_q\}, \qquad 1 \prec 2 \prec \cdots \prec K_q.
\]
For any two distributions $P$ and $Q$ on $\{1,\dots,K_q\}$, define the cumulative distribution functions
\[
F_P(k) := P(A^q \le k), \qquad F_Q(k) := Q(A^q \le k), \qquad k=1,\dots,K_q.
\]

Define the order-restricted worst-case average difference by
\[
\mathrm{OWCAD}(P,Q)
:=
\sup_{\substack{h:\{1,\dots,K_q\}\to[0,1]\\ h \text{ nondecreasing}}}
\left| \mathbb E_P[h(A^q)] - \mathbb E_Q[h(A^q)] \right|.
\]
Relative to TV, the function class is smaller: we still allow arbitrary bounded recodings of the response categories, but only among recodings that preserve the ordinal ranking of the answers.

Next, we show that OWCAD is equivariant with respect to the Kolmogorov distance between two distributions. Although we believe this result is well known, we were unable to find a direct reference in the literature, so we provide a simple proof here.

\begin{lemma}[OWCAD equals the Kolmogorov distance on an ordinal support]
For ordered categorical outcomes,
\[
\mathrm{OWCAD}(P,Q)
\;=\;
\max_{k\in\{1,\dots,K_q-1\}} |F_P(k)-F_Q(k)|.
\]
\end{lemma}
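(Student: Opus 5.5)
The approach is summation by parts: rewrite the expectation difference in terms of the increments of $h$ and the cumulative distribution functions. Then use that the increments of a nondecreasing $[0,1]$-valued function are nonnegative and sum to at most one, so the supremum is attained at a threshold (step) scoring rule.

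\textbf{Step 1 (Abel summation).} Fix a nondecreasing $h:\{1,\dots,K_q\}\to[0,1]$. Write $h(k)=h(1)+\sum_{j=1}^{k-1}\delta_j$ with increments $\delta_j:=h(j+1)-h(j)\ge 0$. Then
\[
\mathbb E_P[h(A^q)]=h(1)+\sum_{j=1}^{K_q-1}\delta_j\,P(A^q>j)=h(1)+\sum_{j=1}^{K_q-1}\delta_j\,\bigl(1-F_P(j)\bigr).
\]
Subtracting the same identity for $Q$, the constant $h(1)$ cancels because both measures have total mass one. This gives
\[
\mathbb E_P[h]-\mathbb E_Q[h]=\sum_{j=1}^{K_q-1}\delta_j\,\bigl(F_Q(j)-F_P(j)\bigr).
\]

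\textbf{Step 2 (upper bound).} Since $\delta_j\ge0$ and $\sum_j\delta_j=h(K_q)-h(1)\le 1$, the triangle inequality gives
\[
\bigl|\mathbb E_P[h]-\mathbb E_Q[h]\bigr|\le \sum_j\delta_j\,|F_P(j)-F_Q(j)|\le \max_{1\le k\le K_q-1}|F_P(k)-F_Q(k)|.
\]
Taking the supremum over $h$ yields $\mathrm{OWCAD}\le$ the Kolmogorov distance.

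\textbf{Step 3 (attainment).} Let $k^*$ be a maximizer of $|F_P(k)-F_Q(k)|$ over $1\le k\le K_q-1$, and take the threshold rule $h^*(a)=\mathbbm{1}[a>k^*]$. This rule is nondecreasing and $[0,1]$-valued. It satisfies $\mathbb E_P[h^*]-\mathbb E_Q[h^*]=F_Q(k^*)-F_P(k^*)$, whose absolute value equals the maximum. So the supremum is attained, which gives equality.

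The only delicate point is bookkeeping in Step 1: getting the index range $1,\dots,K_q-1$ right and checking that the $h(1)$ term cancels. This is also why $k=K_q$ is excluded from the maximum; it is harmless anyway, since $F_P(K_q)=F_Q(K_q)=1$. The sign in the absolute value needs no separate case split, because the bound in Step 2 already uses $|\cdot|$ and the threshold rule in Step 3 achieves the maximum in absolute value directly.
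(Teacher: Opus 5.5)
Your proof is correct and follows the paper's argument essentially step for step. Both rewrite $h$ through its nonnegative increments, which sum to at most one, to bound $|\mathbb E_P[h]-\mathbb E_Q[h]|$ by $\max_k |F_P(k)-F_Q(k)|$, and both attain that bound with the threshold rule $\mathbbm{1}[a>k^\star]$.
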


\begin{proof}
Let $h$ be any nondecreasing function from $\{1,\dots,K_q\}$ to $[0,1]$, and define increments
\[
\Delta_k := h(k+1)-h(k), \qquad k=1,\dots,K_q-1.
\]
Monotonicity implies $\Delta_k\ge 0$, and boundedness implies
\[
\sum_{k=1}^{K_q-1}\Delta_k = h(K_q)-h(1)\le 1.
\]
Writing
\[
h(a)=h(1)+\sum_{k=1}^{a-1}\Delta_k,
\]
and taking expectations under $P$ gives
\[
\mathbb E_P[h(A^q)]
=
h(1)+\sum_{k=1}^{K_q-1}\Delta_k P(A^q>k).
\]
The same argument for $Q$ yields
\[
\mathbb E_Q[h(A^q)]
=
h(1)+\sum_{k=1}^{K_q-1}\Delta_k Q(A^q>k).
\]
Subtracting and using $P(A^q>k)=1-F_P(k)$ and $Q(A^q>k)=1-F_Q(k)$,
\[
\mathbb E_P[h(A^q)]-\mathbb E_Q[h(A^q)]
=
-\sum_{k=1}^{K_q-1}\Delta_k\big(F_P(k)-F_Q(k)\big).
\]
Hence
\begin{align*}
\left| \mathbb E_P[h(A^q)]-\mathbb E_Q[h(A^q)] \right|
&\le
\sum_{k=1}^{K_q-1}\Delta_k |F_P(k)-F_Q(k)| \\
&\le
\left(\sum_{k=1}^{K_q-1}\Delta_k\right)
\max_k |F_P(k)-F_Q(k)| \\
&\le
\max_k |F_P(k)-F_Q(k)|.
\end{align*}
This proves the upper bound. For the reverse direction, choose
\[
k^\star \in \arg\max_{k\in\{1,\dots,K_q-1\}} |F_P(k)-F_Q(k)|
\]
and set $h_{k^\star}(a)=\mathbf{1}\{a>k^\star\}$. Then
\[
\mathbb E_P[h_{k^\star}(A^q)]-\mathbb E_Q[h_{k^\star}(A^q)]
=
P(A^q>k^\star)-Q(A^q>k^\star)
=
-(F_P(k^\star)-F_Q(k^\star)),
\]
so
\[
\left| \mathbb E_P[h_{k^\star}(A^q)]-\mathbb E_Q[h_{k^\star}(A^q)] \right|
=
|F_P(k^\star)-F_Q(k^\star)|.
\]
Therefore the upper bound is attained, proving the claim.
\end{proof}

Lemma 1 shows that the ordinally robust worst-case difference is exactly the Kolmogorov distance
\[
KD(P,Q) := \max_{k\in\{1,\dots,K_q-1\}} |F_P(k)-F_Q(k)|.
\]
This makes the interpretation especially transparent: the relevant discrepancy is the largest gap between the two cumulative response distributions over all monotone thresholds.

\subsection{Counterfactual and Marginal Kolmogorov-Distance Estimands}
Here we follow similar logic as section \ref{sec:totalVariationWCAD} in the main text. Let $p^q_{Locals}$ denote the observed response distribution of UK-born locals for question $q$, and let $p^q_{Immigrants}$ denote the corresponding observed immigrant response distribution. Define the global counterfactual distribution for immigrants under $MigAge=0$ by
\[
p^{CF,0}_{Immigrants}(a)
=
\int
P\big(A^q_{i,MigAge=0}=a \mid MigAge=x, f(i), Immigrants\big)
\,
dF\big(f(i), MigAge=x \mid Immigrants\big).
\]
The associated global Kolmogorov-distance effect is
\[
\Delta KD_q^0
:=
KD\!\left(p^q_{Locals}, p^q_{Immigrants}\right)
-
KD\!\left(p^q_{Locals}, p^{CF,0}_{Immigrants}\right).
\]
A positive value of $\Delta KD_q^0$ means that setting immigrant migration age to zero reduces the maximum cumulative immigrant--local discrepancy for question $q$.

To define the marginal analogue, let $p^q_{\epsilon,Immigrants}$ denote the immigrant response distribution under a small uniform perturbation $\epsilon$ to migration age:
\[
p^q_{\epsilon,Immigrants}(a)
=
\int
P\big(A^q_{i,x+\epsilon}=a \mid MigAge=x, f(i), Immigrants\big)
\,
dF\big(f(i), MigAge=x \mid Immigrants\big).
\]
We then define the \emph{Marginal Kolmogorov Divergence} by
\[
MKD_q
:=
\frac{\partial}{\partial \epsilon}
KD\!\left(p^q_{\epsilon,Immigrants}, p^q_{Locals}\right)
\bigg|_{\epsilon=0}.
\]
This estimand captures how a small uniform increase in migration age changes the largest immigrant--local cumulative gap for question $q$.

\subsection{Estimation}

As in the main text, we estimate a family fixed-effects linear probability model separately for each response category $a$ of question $q$:
\begin{equation}
\Pr(A_i^q=a \mid MigAge_i, f(i))
=
MigAge_i \beta_{q,a} + \gamma_{f(i)}^{q,a}.
\label{eq:kd_lpm}
\end{equation}
Here, again, $\beta_{q,a}$ is the effect of migration age on the probability of response $a$, and $\gamma_{f(i)}^{q,a}$ is the family fixed effect for category $a$.

Let $\widehat p^q_{Immigrants}(a)$ denote the observed immigrant share choosing category $a$, $\widehat p^q_{Locals}(a)$ the corresponding local share, and $\widehat p^{CF,0,q}_{Immigrants}(a)$ the counterfactual immigrant share obtained by setting $MigAge_i=0$ in the estimated model for all immigrants and averaging the predicted probabilities.

Define the empirical cumulative distributions
\[
\widehat F^q_{Immigrants}(k)
=
\sum_{a\le k}\widehat p^q_{Immigrants}(a),\qquad
\widehat F^q_{Locals}(k)
=
\sum_{a\le k}\widehat p^q_{Locals}(a),\qquad
\widehat F^{CF,0,q}_{Immigrants}(k)
=
\sum_{a\le k}\widehat p^{CF,0,q}_{Immigrants}(a).
\]
Then the empirical Kolmogorov distances are given by 
\[
\widehat{KD}^q_{Immigrants,Locals}
=
\max_{k\in\{1,\dots,K_q-1\}}
\left|
\widehat F^q_{Immigrants}(k)-\widehat F^q_{Locals}(k)
\right|,
\]
and
\[
\widehat{KD}^{CF,0,q}_{Immigrants,Locals}
=
\max_{k\in\{1,\dots,K_q-1\}}
\left|
\widehat F^{CF,0,q}_{Immigrants}(k)-\widehat F^q_{Locals}(k)
\right|.
\]
Our estimator of the global effect is
\[
\widehat{\Delta KD_q^0}
=
\widehat{KD}^q_{Immigrants,Locals}
-
\widehat{KD}^{CF,0,q}_{Immigrants,Locals}.
\]

To estimate the marginal effect, for each cutoff $k$, define the cumulative immigrant--local gap
\[
G_q(k)
:=
\widehat F^q_{Immigrants}(k)-\widehat F^q_{Locals}(k),
\]
and the cumulative migration-age slope
\[
B_q(k)
:=
\sum_{a\le k}\widehat\beta_{q,a}.
\]
Because
\[
KD_q = \max_k |G_q(k)|,
\]
the derivative depends on the cutoff(s) attaining the maximum. If the maximizing cutoff is unique, say
\[
k_q^\star \in \arg\max_k |G_q(k)|,
\]
then the marginal effect is estimated by\footnote{If several cutoffs tie for the maximum, the derivative of the max operator need not be unique. In that case, the marginal effect is naturally set-valued. Let
\[
\mathcal K_q^\star
:=
\arg\max_k |G_q(k)|.
\]
Then the identified interval is
\[
\widehat{MKD}_q
\in
\left[
\min_{k\in\mathcal K_q^\star}\mathrm{sign}(G_q(k))B_q(k),
\;
\max_{k\in\mathcal K_q^\star}\mathrm{sign}(G_q(k))B_q(k)
\right].
\] 
}
\[
\widehat{MKD}_q
=
\mathrm{sign}\!\big(G_q(k_q^\star)\big)\,B_q(k_q^\star).
\]

Inference follows the same block-bootstrap logic as in the TV analysis. We resample families with replacement, re-estimate the category-specific family fixed-effects linear probability models, reconstruct the counterfactual category shares under $MigAge=0$, and recompute the implied cumulative distributions and Kolmogorov distances in each bootstrap draw. 

\subsection{Empirical Results}

Table \ref{tab:KolmogorovMeasure} reports the Kolmogorov-distance analogues to the TV results in the main text. The first row reports the observed immigrant--local Kolmogorov distance for each question. The second row reports the corresponding counterfactual distance under $MigAge=0$, and the third row reports the difference between the observed and counterfactual distances. The last row reports the marginal effect.

\begin{table}[H]
    \centering
    \resizebox{\textwidth}{!}{
    \footnotesize
    \begin{tabular}{@{}l c c c c@{}}
\toprule
  & Preschool child & Family suffers if & Husband and wife & Husband should earn, \\
  &  suffers if mother &  mother works &  should contribute &  wife should \\
  &  works &  full time  &  to HH income  &  stay at home \\
  \midrule
Kolmogorov - Migrants Vs. Locals & 0.146 & 0.137 & 0.044 & 0.148 \\
 & [0.103, 0.193] & [0.087, 0.195] & [0.023, 0.088] & [0.098, 0.198] \\
Kolmogorov - CF Migrants Vs. Locals & 0.112 & 0.083 & 0.056 & 0.063 \\
 & [0.049, 0.2] & [0.044, 0.186] & [0.027, 0.156] & [0.021, 0.162] \\
 \midrule
Difference & 0.034 & 0.054 & -0.011 & 0.085 \\
 & [-0.043, 0.097] & [-0.036, 0.099] & [-0.104, 0.022] & [-0.012, 0.133] \\
 \midrule
Marginal KS & 0.007 & 0.011 & -0.002 & 0.017 \\
 & [-0.007, 0.024] & [-0.007, 0.027] & [-0.015, 0.017] & [0.001, 0.034] \\
\bottomrule
\end{tabular}

    }
    \caption{Kolmogorov-Distance-Based Distributional Convergence}
    \label{tab:KolmogorovMeasure}
    \begin{minipage}{\linewidth}
    \footnotesize
    \justifying
    \textit{Note:} Kolmogorov-distance-based distributional comparisons by question. The first row reports the observed Kolmogorov distance between the cumulative response distributions of immigrants and locals. The second row reports the corresponding distance between the counterfactual immigrant response distribution under $MigAge=0$ and the local response distribution. The ``Difference'' row reports the reduction in the Kolmogorov distance implied by setting $MigAge=0$. The final row reports the marginal Kolmogorov effect. Each cell reports a point estimate, followed by a bootstrap 95\% confidence interval in square brackets. 
    \end{minipage}
\end{table}

Table \ref{tab:KolmogorovMeasure} reports results for a distance measure that captures the maximum immigrant–local discrepancy across cumulative thresholds of the ordinal response scale. Relative to Total Variation, this measure imposes an additional monotonicity restriction by considering only increasing functions of the response categories. Nevertheless, the results closely resemble those in Table \ref{tab:TotalVariationMeasure}: the observed immigrant–local gap, the counterfactual immigrant–local gap, and the marginal effects all exhibit the same qualitative pattern. The similarity in magnitudes indicates that our baseline TV results are not driven by arbitrary or non-monotone reweighting across categories. Instead, the same convergence pattern emerges even when the comparison is restricted to ordered cumulative contrasts, suggesting that the effect of migration age reflects meaningful shifts in the distribution of attitudes rather than merely diffuse reshuffling across adjacent response categories. 

\section{Tables and Figures}
\begin{table}[H]
    \centering
    \footnotesize
    \begin{tabular}{lcc}
\hline
\textbf{Continent} & \textbf{Immigrants in Siblings Sample} & \textbf{All Immigrants} \\
\hline
Africa & 154 (21\%)& 858 (20)\% \\
America & 40 (5\%)& 503 (12)\% \\
Asia & 350 (48\%)& 1889 (45)\% \\
Europe & 167 (23\%)& 897 (21)\% \\
Other or unknown & 18 (2\%)& 90 (2)\% \\
\hline
\end{tabular}

    \caption{Origin Continent}\label{table:originCont}
        \begin{minipage}{\linewidth}
    \footnotesize
    \justifying
    \textit{Note:} Composition of immigrants origins by continent. Rows list continents (and an “Other or unknown” category). Two columns report counts and percentages for immigrants in the siblings sample and for all immigrants, with percentages shown in parentheses next to the counts.
    \end{minipage}
\end{table}

\begin{table}[H]
    \centering
    \resizebox{\textwidth}{!}{
    \footnotesize
    \begin{tabular}{lcccc}
\hline
\textbf{} & \makecell{\textbf{Pre-School} \\ \textbf{Suffers} \\ \textbf{if Mother Works}} & \makecell{\textbf{Family Suffers} \\ \textbf{if Mother Works} \\ \textbf{Full Time}} & \makecell{\textbf{Husband and Wife} \\ \textbf{Should Contribute} \\ \textbf{to HH Income}} & \makecell{\textbf{Husband Should} \\ \textbf{Earn,} \\ \textbf{Wife Stay at Home}} \\
\hline
Immigrant Siblings & 2.99 & 3.27 & 2.07 & 3.56 \\
  & (1.06) & (1.14) & (0.92) & (1.19) \\
Local Siblings & 3.33 & 3.62 & 2.13 & 3.92 \\
  & (1.01) & (1.05) & (0.90) & (1.04) \\
Immigrants & 2.76 & 2.91 & 2.13 & 3.37 \\
  & (1.10) & (1.15) & (0.89) & (1.18) \\
Locals & 3.16 & 3.26 & 2.29 & 3.70 \\
  & (1.02) & (1.07) & (0.89) & (1.05) \\
\hline
Strongly Agree \\
\hline
Immigrants Siblings & 0.07 & 0.06 & 0.31 & 0.06 \\
 & (0.26) & (0.25) & (0.46) & (0.24) \\
Local Siblings & 0.04 & 0.03 & 0.26 & 0.02 \\
 & (0.19) & (0.17) & (0.44) & (0.15) \\
Immigrants & 0.13 & 0.11 & 0.25 & 0.07 \\
 & (0.33) & (0.31) & (0.43) & (0.26) \\
Locals & 0.05 & 0.05 & 0.19 & 0.03 \\
 & (0.23) & (0.21) & (0.39) & (0.17) \\
\hline
Agree \\
\hline
Immigrants Siblings & 0.26 & 0.20 & 0.38 & 0.13 \\
 & (0.44) & (0.40) & (0.49) & (0.33) \\
Local Siblings & 0.16 & 0.12 & 0.41 & 0.07 \\
 & (0.36) & (0.32) & (0.49) & (0.26) \\
Immigrants & 0.31 & 0.29 & 0.45 & 0.17 \\
 & (0.46) & (0.45) & (0.50) & (0.37) \\
Locals & 0.21 & 0.21 & 0.41 & 0.09 \\
 & (0.40) & (0.40) & (0.49) & (0.29) \\
\hline
Neither Agree or Disagree \\
\hline
Immigrants Siblings & 0.34 & 0.31 & 0.25 & 0.28 \\
 & (0.48) & (0.46) & (0.43) & (0.45) \\
Local Siblings & 0.37 & 0.29 & 0.27 & 0.23 \\
 & (0.48) & (0.45) & (0.44) & (0.42) \\
Immigrants & 0.30 & 0.27 & 0.24 & 0.27 \\
 & (0.46) & (0.45) & (0.43) & (0.44) \\
Locals & 0.36 & 0.32 & 0.32 & 0.28 \\
 & (0.48) & (0.47) & (0.47) & (0.45) \\
\hline
Disagree \\
\hline
Immigrants Siblings & 0.24 & 0.27 & 0.05 & 0.26 \\
 & (0.42) & (0.44) & (0.21) & (0.44) \\
Local Siblings & 0.31 & 0.34 & 0.04 & 0.31 \\
 & (0.46) & (0.47) & (0.20) & (0.46) \\
Immigrants & 0.20 & 0.23 & 0.05 & 0.29 \\
 & (0.40) & (0.42) & (0.22) & (0.46) \\
Locals & 0.29 & 0.30 & 0.06 & 0.34 \\
 & (0.45) & (0.46) & (0.24) & (0.47) \\
\hline
Strongly Disagree \\
\hline
Immigrants Siblings & 0.08 & 0.16 & 0.01 & 0.27 \\
 & (0.28) & (0.37) & (0.11) & (0.45) \\
Local Siblings & 0.13 & 0.23 & 0.01 & 0.36 \\
 & (0.33) & (0.42) & (0.11) & (0.48) \\
Immigrants & 0.06 & 0.09 & 0.01 & 0.19 \\
 & (0.24) & (0.29) & (0.11) & (0.40) \\
Locals & 0.09 & 0.13 & 0.01 & 0.26 \\
 & (0.29) & (0.33) & (0.11) & (0.44) \\
\hline
\end{tabular}

    }
    \caption{Response Distribution By Group}\label{table:responseByGroup}
                \begin{minipage}{\linewidth}
    \footnotesize
    \justifying
    \textit{Note:} Response distributions for each of the four statements (columns), reported for multiple groups (rows): Immigrant Siblings, Local Siblings, Immigrants, and Locals. The top panel reports the mean response level for each group–question pair with the standard deviation in parentheses. Subsequent panels report, for each response category (Strongly Agree, Agree, Neither Agree or Disagree, Disagree, Strongly Disagree), the mean share/probability in that category with the standard deviation in parentheses.
    \end{minipage}
\end{table}

\begin{table}[H]
    \centering
    \resizebox{\textwidth}{!}{
    \footnotesize
    \begin{tabular}{lcccccccccccc}
\hline
\textbf{Group} & \textbf{Min} & \textbf{1p} & \textbf{5p} & \textbf{10p} & \textbf{25p} & \textbf{50p} & \textbf{75p} & \textbf{90p} & \textbf{99p} & \textbf{Max} & \textbf{Mean} \\
\hline
Immigrant-Siblings & 16.00 & 16.00 & 16.00 & 17.00 & 19.00 & 22.00 & 27.00 & 33.00 & 54.00 & 71.00 & 23.95 \\
Local-Siblings & 16.00 & 16.00 & 16.00 & 17.00 & 19.00 & 23.00 & 28.00 & 36.00 & 63.00 & 96.00 & 25.29 \\
Immigrants & 16.00 & 16.00 & 19.00 & 23.00 & 32.00 & 43.00 & 56.00 & 70.00 & 85.00 & 100.00 & 44.95 \\
Locals & 16.00 & 16.00 & 18.00 & 21.00 & 31.00 & 48.00 & 64.00 & 75.00 & 88.00 & 103.00 & 48.32 \\
\hline
\end{tabular}

    }
    \caption{Age Distribution By Group}\label{table:ageByGroup}
    \begin{minipage}{\linewidth}
    \footnotesize
    \justifying
    \textit{Note:} 
    Summary statistics for age by group (Immigrant-Siblings, Local-Siblings, Immigrants, Locals). Columns report the minimum, selected percentiles (1p, 5p, 10p, 25p, 50p, 75p, 90p, 99p), maximum, and the mean age for each group.
    \end{minipage}
\end{table}

\begin{table}[H]
    \centering
    \footnotesize
    \begin{tabular}{lcccc}
\hline
\textbf{} & \makecell{\textbf{Pre-School} \\ \textbf{Suffers} \\ \textbf{if Mother Works}} & \makecell{\textbf{Family Suffers} \\ \textbf{if Mother Works} \\ \textbf{Full Time}} & \makecell{\textbf{Husband and Wife} \\ \textbf{Should Contribute} \\ \textbf{to HH Income}} & \makecell{\textbf{Husband Should} \\ \textbf{Earn,} \\ \textbf{Wife Stay at Home}} \\
\hline
Age at Interview   & 25.2 & 25.2 & 25.2 & 25.2 \\ 
  & (9.1) & (9.2) & (9.2) & (9.1) \\ 
Sex  & 0.52 & 0.52 & 0.52 & 0.52 \\ 
  & (0.50) & (0.50) & (0.50) & (0.50) \\ 
Wave b   & 0.11 & 0.11 & 0.11 & 0.11 \\ 
  & (0.32) & (0.32) & (0.32) & (0.31) \\ 
Wave d   & 0.28 & 0.28 & 0.28 & 0.28 \\ 
   & (0.45) & (0.45) & (0.45) & (0.45) \\ 
Wave j   & 0.61 & 0.61 & 0.61 & 0.61 \\ 
   & (0.49) & (0.49) & (0.49) & (0.49) \\ 
Average Score  & 3.34 & 3.63 & 2.13 & 3.93 \\ 
  & (1.01) & (1.05) & (0.90) & (1.03) \\ 
Family Size  & 2.5   & 2.5   & 2.5   & 2.5   \\ 
      & (0.8)   & (0.8)   & (0.8)   & (0.8)   \\ 
Num. Families  & 2777.0   & 2776.0   & 2775.0   & 2773.0   \\ 
Num. Obs  & 6260.0  & 6259.0  & 6254.0  & 6250.0  \\ 
\bottomrule
\end{tabular}

    \caption{Descriptive Statistics - Local Families}
    \label{desc_locals}
    \begin{minipage}{\linewidth}
    \footnotesize
    \justifying
    \textit{Note:} Descriptive statistics for the local-siblings sample, shown separately for each of the four attitude questions (columns (1)–(4): Pre-School Suffers if Mother Works; Family Suffers if Mother Works Full Time; Husband and Wife Should Contribute to Household Income; Husband Should Earn, Wife Stay at Home). Rows report the mean of key variables (Migration Age, Migration Year, Age at Interview, Sex, survey-wave indicators, Average Score, and Family Size), with standard deviations in parentheses. The bottom rows report the number of families and number of observations used for each column.
    \end{minipage}

\end{table}

\begin{table}[H]    
    \centering
    \resizebox{\textwidth}{!}{
    \footnotesize
    \begin{tabular}{lcccc}
\toprule \toprule
\textbf{} & \textbf{(1)} & \textbf{(2)} & \textbf{(3)} & \textbf{(4)} \\
\hline\textbf{} & \makecell{\textbf{Pre-School Suffers} \\ \textbf{if Mother Works}} & \makecell{\textbf{Family Suffers} \\ \textbf{if Mother Works} \\ \textbf{Full Time}} & \makecell{\textbf{Husband and Wife} \\ \textbf{Should Contribute} \\ \textbf{to HH Income}} & \makecell{\textbf{Husband Should Earn,} \\ \textbf{Wife Stay at Home}} \\
\hline
Average Mean & -0.023 & -0.025 & 0.014 & -0.041 \\
 & (0.014) & (0.012) & (0.012) & (0.014) \\
+ Oldest indicator and sex indicator & -0.024 & -0.040 & 0.012 & -0.048 \\
 & (0.016) & (0.014) & (0.013) & (0.015) \\
+ Linear age & -0.008 & -0.029 & 0.006 & -0.053 \\
 & (0.015) & (0.015) & (0.013) & (0.016) \\
+ Flexible age  & -0.006 & -0.039 & 0.008 & -0.052 \\
 & (0.016) & (0.016) & (0.015) & (0.017) \\
\midrule
Agree & 0.012 & 0.011 & -0.005 & 0.017 \\
 & (0.006) & (0.005) & (0.003) & (0.006) \\
+ Oldest indicator and sex indicator & 0.013 & 0.013 & -0.006 & 0.021 \\
 & (0.007) & (0.006) & (0.003) & (0.006) \\
+ Linear age & 0.006 & 0.010 & -0.004 & 0.024 \\
 & (0.007) & (0.006) & (0.003) & (0.007) \\
+ Flexible age  & 0.008 & 0.013 & -0.004 & 0.024 \\
 & (0.008) & (0.007) & (0.003) & (0.008) \\
\bottomrule
\end{tabular}

    }
    \caption{Robustness Checks}\label{table:robustness}
    \begin{minipage}{\linewidth}
    \footnotesize
    \justifying
    \textit{Note:} 
        Coefficient estimates for the relationship between migration age and outcomes across the four statements (columns (1)–(4)). The table reports results for two outcomes—Average Mean and Agree—and then repeats these outcomes across alternative specifications (rows) that add different controls (adding indicator for the sibling being the oldest, adding linear age, adding flexible age). Each coefficient is followed by its clustered standard error in parentheses.
    \end{minipage}
\end{table}


\begin{table}[H]
    \centering
    \resizebox{\textwidth}{!}{
    \footnotesize
    \begin{tabular}{lcccc}
\toprule \toprule
\textbf{} & \textbf{(1)} & \textbf{(2)} & \textbf{(3)} & \textbf{(4)} \\
\hline
\textbf{} & \makecell{\textbf{Pre-School Suffers} \\ \textbf{if Mother Works}} & \makecell{\textbf{Family Suffers} \\ \textbf{if Mother Works} \\ \textbf{Full Time}} & \makecell{\textbf{Husband and Wife} \\ \textbf{Should Contribute} \\ \textbf{to HH Income}} & \makecell{\textbf{Husband Should Earn,} \\ \textbf{Wife Stay at Home}} \\
\hline
\textbf{Mean Linear Score } \\
\hline
Western Europe, USA, Canada & -0.003 & -0.008 & 0.013 & -0.008 \\
 & (0.020) & (0.024) & (0.022) & (0.030) \\
Rest of the World & -0.027 & -0.028 & 0.014 & -0.053 \\
 & (0.016) & (0.014) & (0.014) & (0.016) \\
\hline
Different from UK & -0.011 & -0.009 & 0.015 & -0.048 \\
 & (0.023) & (0.019) & (0.018) & (0.020) \\
Similar to the UK & -0.003 & -0.011 & -0.011 & 0.004 \\
 & (0.024) & (0.027) & (0.017) & (0.027) \\
\midrule
\textbf{Agree } \\
\hline
Western Europe, USA, Canada & -0.001 & -0.003 & -0.005 & -0.001 \\
 & (0.011) & (0.010) & (0.004) & (0.012) \\
Rest of the World & 0.014 & 0.014 & -0.006 & 0.023 \\
 & (0.006) & (0.006) & (0.004) & (0.007) \\
\hline
Different from UK & 0.014 & 0.016 & -0.003 & 0.024 \\
 & (0.008) & (0.007) & (0.005) & (0.008) \\
Similar to the UK & 0.0115 & 0.0111 & 0.0019 & -0.0049 \\
 & (0.0116) & (0.0114) & (0.0030) & (0.0116) \\
\bottomrule
\end{tabular}

    }
    \caption{Heterogeneity Across Country of Origin}\label{tab:hetro}
\begin{minipage}{\linewidth}
    \footnotesize
    \justifying
    \textit{Note:} 
    Coefficient estimates by subgroup definitions, reported separately for Mean Linear Score and for Agree, across the four statements (columns (1)–(4)). Rows list different subgroup splits (e.g., by region groupings such as Europe vs. Rest of the World, and other region-category definitions). Each cell reports a coefficient estimate with its clustered standard error in parentheses.
    \end{minipage}
\end{table}

\begin{table}[H]
    \centering
    \resizebox{\textwidth}{!}{
    \footnotesize
    \begin{tabular}{@{}l c c c c@{}}
\toprule
  & Preschool child & Family suffers if & Husband and wife & Husband should earn, \\
  &  suffers if mother &  mother works &  should contribute &  wife should \\
  &  works &  full time  &  to HH income  &  stay at home \\
  \hline
Total Variation - Different from UK Vs. Locals & 0.231 & 0.293 & 0.059 & 0.309 \\
 & [0.187, 0.289] & [0.229, 0.353] & [0.03, 0.121] & [0.249, 0.373] \\
 Total Variation - CF Different from UK Vs. Locals & 0.24 & 0.228 & 0.078 & 0.207 \\
 & [0.158, 0.35] & [0.152, 0.334] & [0.042, 0.17] & [0.123, 0.304] \\
Difference & -0.009 & 0.065 & -0.019 & 0.101 \\
 & [-0.102, 0.079] & [-0.027, 0.128] & [-0.085, 0.029] & [0.022, 0.167] \\
\midrule
Total Variation - Similar to the UK Vs. Locals & 0.114 & 0.059 & 0.06 & 0.042 \\
 & [0.072, 0.176] & [0.033, 0.127] & [0.034, 0.121] & [0.028, 0.12] \\
 Total Variation - CF Similar to the UK Vs. Locals & 0.169 & 0.14 & 0.063 & 0.061 \\
 & [0.092, 0.284] & [0.075, 0.259] & [0.036, 0.185] & [0.049, 0.197] \\
Difference & -0.055 & -0.082 & -0.003 & -0.019 \\
 & [-0.163, 0.033] & [-0.183, 0.0] & [-0.102, 0.044] & [-0.126, 0.033] \\
\midrule
MTVD Different from UK Vs. Locals & 0.028 & 0.032 & -0.0 & 0.05 \\
 & [-0.044, 0.071] & [-0.008, 0.072] & [-0.035, 0.036] & [0.01, 0.085] \\
\midrule
MTVD Similar to the UK Vs. Locals & 0.003 & -0.015 & 0.006 & 0.011 \\
 & [-0.061, 0.048] & [-0.068, 0.049] & [-0.036, 0.042] & [-0.037, 0.053] \\
\bottomrule
\end{tabular}

    }
    \caption{Heterogeneity across Country of Origin by Distance from the UK Gender Norms}
    \label{tab:TV_hetro_differentUK}
        \begin{minipage}{\linewidth}
    \footnotesize
    \justifying
    \textit{Note:} This table is analogous to Table \ref{tab:hetro}. In Section 1, we compare the total variation distance between the response distributions of immigrants with from counties who are far from the UK gender norms, as measure in Appendix \ref{appendix:similarity}. Far is considered above the median distance, and the local response distribution for each of the 4 statements.   
    The next row reports the corresponding counterfactual for these immigrants under the scenario in which they were born in the UK. The final row in Section 1 reports the difference between the observed and counterfactual distances. Section 2 presents the same set of quantities for immigrants from countries who are similar to the UK. The last section reports the MTVD for the two groups. Each cell reports a point estimate, with a 95\% confidence interval in square brackets constructed from 1{,}000 bootstrap replications.
    \end{minipage}
    
\end{table}
\section{Additional Figures}
\begin{figure}[H]
    \centering
    \includegraphics[scale=1]{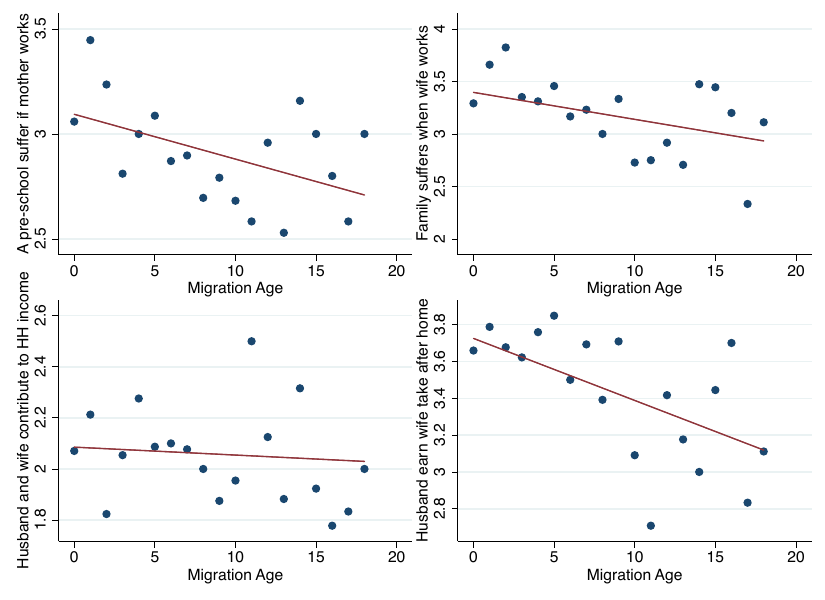}
    \caption{Relation between migration age and attitude towards Gender role for the Siblings sample}
    \label{fig:migrationAge_siblings}
    \begin{minipage}{\linewidth}
    \footnotesize
    \justifying
    \textit{Note:} Four-panel binscatter of migration age (x-axis, years) against the mean coded response (y-axis) for each statement, restricted to the immigrant siblings sample. Each panel overlays a fitted linear trend line on the plotted age-specific mean responses.
    \end{minipage}

\end{figure}

\begin{figure}[H]
    \centering
    \includegraphics[scale=1]{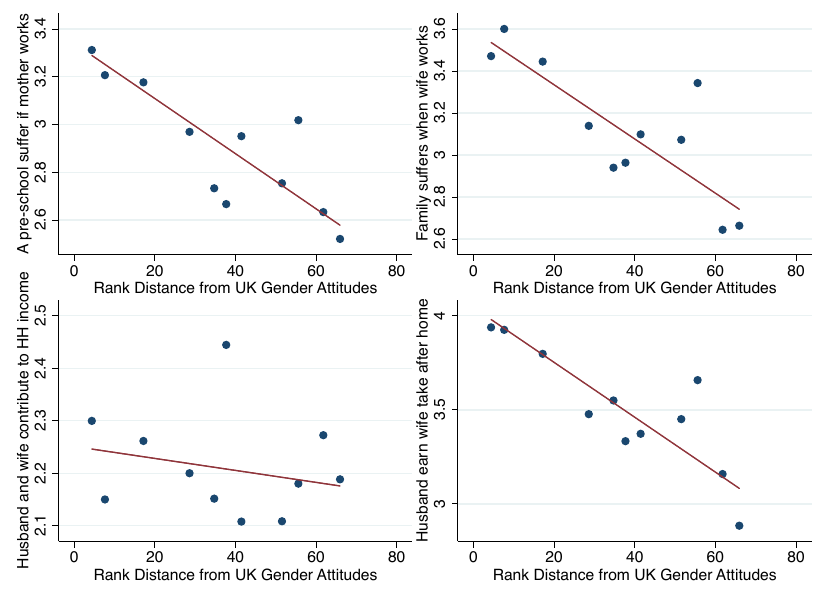}
    \caption{Relation between the origin rank of distance from UK and attitude towards Gender role - All Immigrants}
    \label{fig:distance_allMig}
    \begin{minipage}{\linewidth}
    \footnotesize
    \justifying
    \textit{Note:} Four-panel scatterplot of the origin-country “Rank Distance from UK Gender Attitudes” (x-axis), as describe in Appendix \ref{appendix:similarity}, against the mean coded response (y-axis) for each statement, using all immigrants in the UKHLS data. Each panel overlays a fitted linear trend line on the plotted country-level points.
    \end{minipage}
\end{figure}

\begin{figure}[H]
    \centering
    \includegraphics[scale=1]{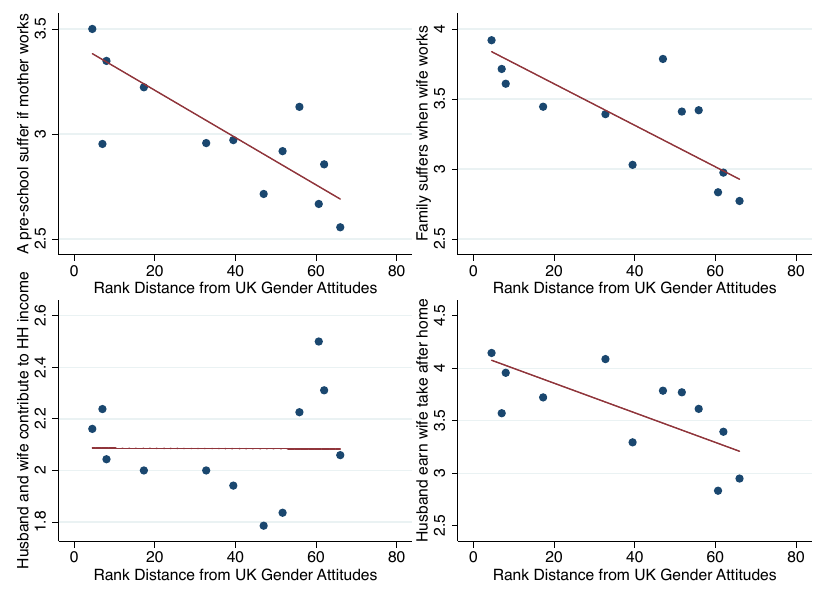}
    \caption{Relation between the origin rank of distance from UK and attitude towards Gender role - Siblings Sample}
    \label{fig:distance_siblings}
    \begin{minipage}{\linewidth}
    \footnotesize
    \justifying
    \textit{Note:} Four-panel scatterplot of the origin-country “Rank Distance from UK Gender Attitudes” (x-axis), as describe in Appendix \ref{appendix:similarity}, against the mean coded response (y-axis) for each statement, using the immigrants siblings sample. Each panel overlays a fitted linear trend line on the plotted country-level points.
    \end{minipage}
\end{figure}


\begin{figure}[H]
    \centering
    \includegraphics[scale=1]{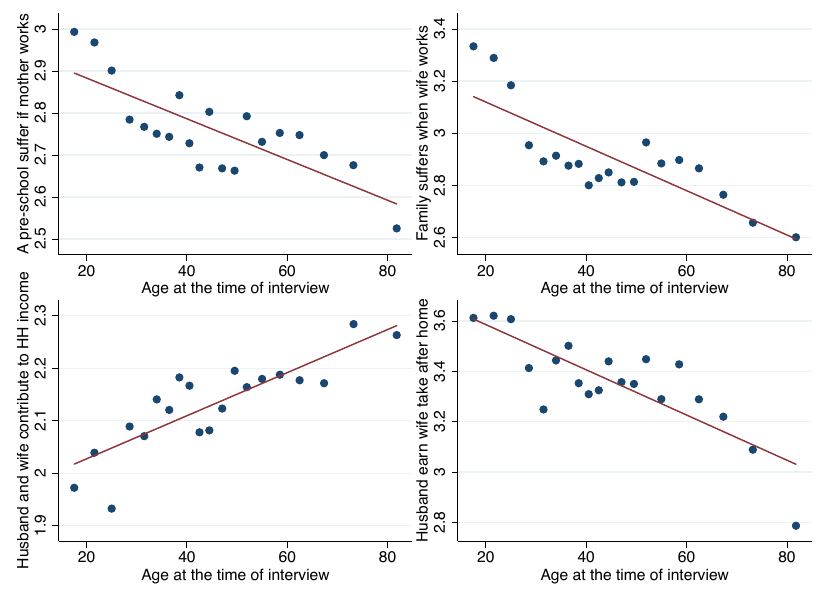}
    \caption{Relation between age at the time of conducting the interview and average linear response - Immigrants}
    \label{fig:immigrants_Age_answers_relation}
    \begin{minipage}{\linewidth}
    \footnotesize
    \justifying
    \textit{Note:} Four-panel scatterplot of respondent age at the time of interview (x-axis) against the mean coded response (y-axis; “average linear response”) for each statement, for immigrants. Each panel overlays a fitted linear trend line on the plotted age-specific mean responses. 
    \end{minipage}
\end{figure}

\begin{figure}[H]
    \centering
    \includegraphics[scale=1]{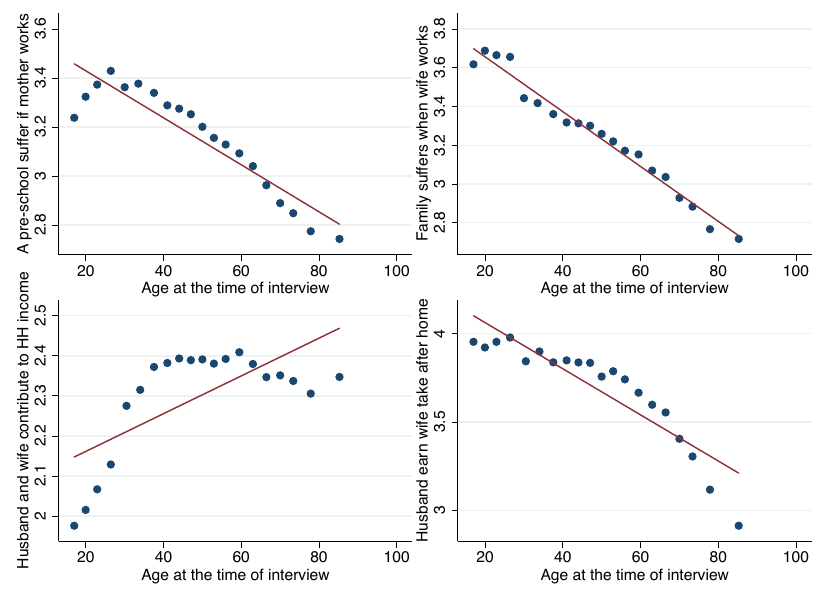}
    \caption{Relation between age at the time of conducting the interview and average linear response - Locals}
    \label{fig:locals_Age_answers_relation}
    \begin{minipage}{\linewidth}
        \footnotesize
        \justifying
        \textit{Note:}  Four-panel scatterplot of respondent age at the time of interview (x-axis) against the mean coded response (y-axis; “average linear response”) for each statement, for locals. Each panel overlays a fitted linear trend line on the plotted age-specific mean responses. The figure does not display standard errors, confidence intervals, or other uncertainty bands. 
        \end{minipage}
\end{figure}

\begin{figure}[H]
    \centering
    \includegraphics[width=0.5\linewidth]{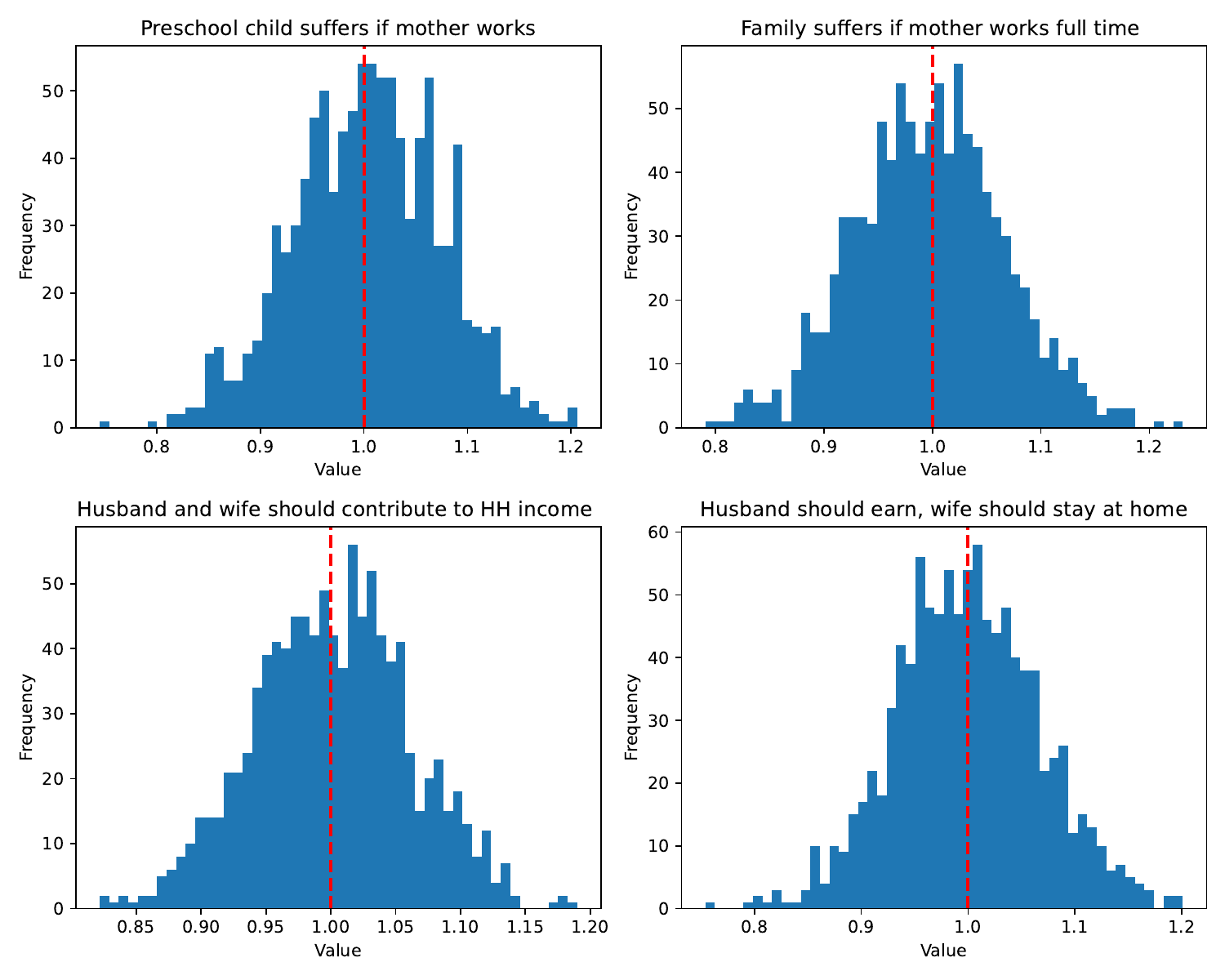}
    \caption{Bootstrap Distribution of Sum of Fixed Effects}
    \label{fig:sumOfFixedEffect}
    \begin{minipage}{\linewidth}
    \footnotesize
    \justifying
    Four-panel histogram of bootstrap draws of the statistic \(\sum_a \gamma^{f(i)}_{q,a}\) (aggregated across individuals \(i\) and questions \(q\)) from Model 3, shown separately for each attitude statement. The x-axis reports the sum value and the y-axis reports the frequency across 1000 bootstrap replications. A dashed vertical reference line marks 1, corresponding to the probability-sum constraint; a solid vertical line marks the bootstrap average as plotted in the figure.  
    \end{minipage}
\end{figure}

\begin{figure}[H]
    \centering
    \includegraphics[width=0.5\linewidth]{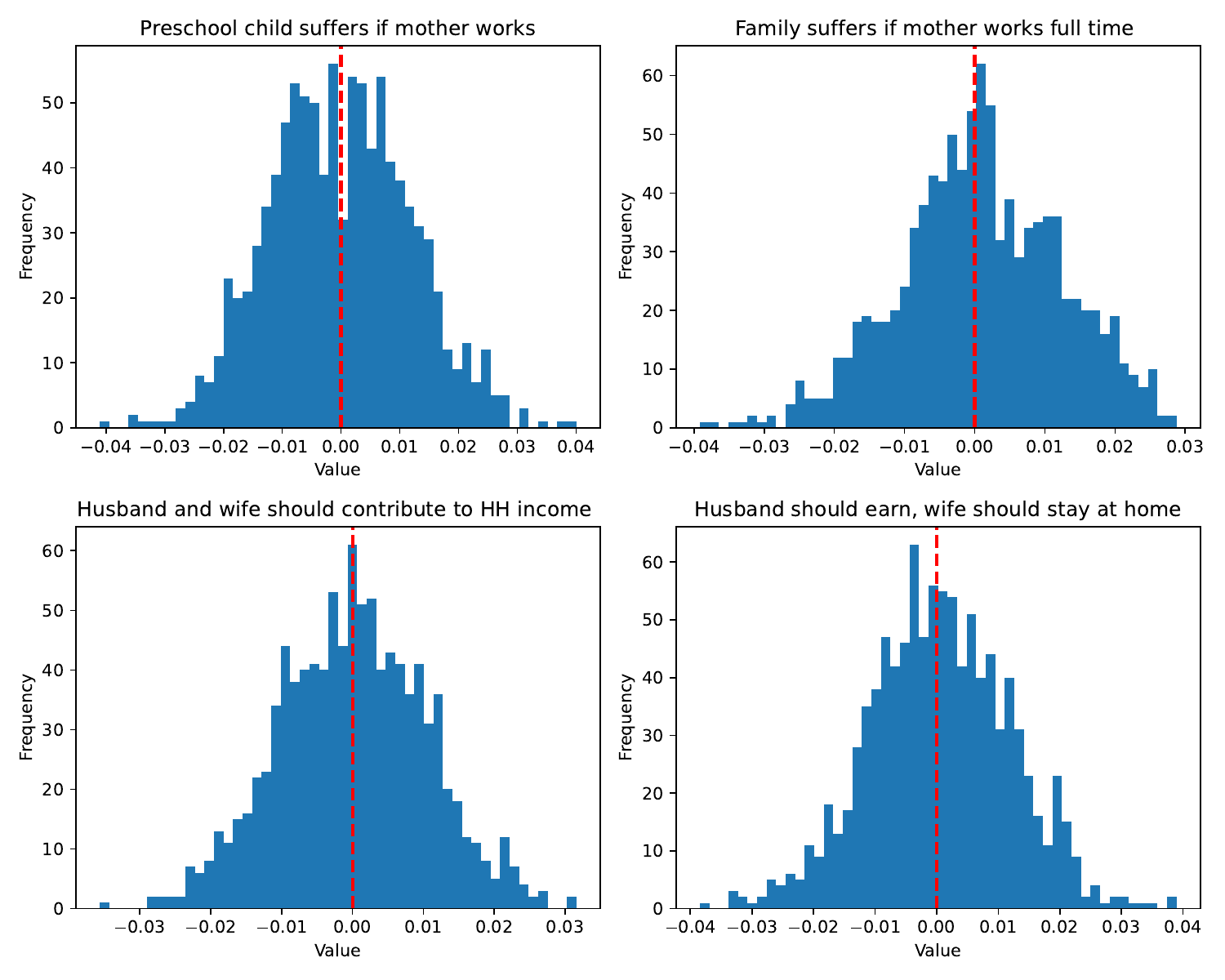}
    \caption{Bootstrap Distribution of Sum of Betas}
    \label{fig:sumOfBetas}
    \begin{minipage}{\linewidth}
    \footnotesize
    \justifying
    Four-panel histogram of bootstrap draws of the statistic \(\sum_a \beta_{q,a}\) from Model 3, shown separately for each attitude statement. The x-axis reports the statistic value and the y-axis reports the frequency across bootstrap replications. A dashed vertical reference line marks 0, corresponding to the constraint that changes in response probabilities across answer categories sum to zero; a solid vertical line marks the bootstrap average as plotted in the figure.  
    \end{minipage}
\end{figure}

\end{document}